\documentclass[11pt, letterpaper]{article}
\usepackage{cite}
\usepackage{amsmath,amssymb,amsfonts}
\usepackage{algorithmic}
\usepackage{graphicx}
\usepackage{textcomp}
\usepackage{amsthm}
\usepackage{mathptm}   
\usepackage[capitalize]{cleveref}
\usepackage[margin=1in]{geometry}
\usepackage{subfig}
\usepackage{booktabs}
\usepackage[usenames,dvipsnames]{xcolor}
\usepackage{soul}
\usepackage{tikz}
\usepackage{tabu}
\usepackage{array}
\usepackage{booktabs}
\usepackage{multirow}
\usepackage{makecell}
\makeatletter
\newcommand*{\rom}[1]{\expandafter\@slowromancap\romannumeral #1@}
\makeatother

\newtheorem{lemma}{Lemma}
\newtheorem{remark}{Remark}


\usepackage{authblk}

\title{Proportional Power Sharing Control of Distributed Generators \\in Microgrids}
\author[1]{Farzad Aalipour \thanks{farzad.aalipour@knights.ucf.edu}}
\author[2]{Tuhin Das \thanks{tuhin.das@ucf.edu}}
\affil[1,2]{University of Central Florida \\Department of Mechanical and Aerospace Engineering\\ Orlando, Florida, 32816 }
\date{}

\begin{document}

\maketitle
\vspace{-0.5in}
\section*{Abstract}
This research addresses distributed proportional power sharing of inverter-based Distributed Generators (DGs) in microgrids under variations in maximum power capacity of DGs. A microgrid can include renewable energy resources such as wind turbines, solar panels, fuel cells, etc. The intermittent nature of such energy resources causes variations in their maximum power capacities. Since DGs in microgrids can be regarded as Multi-Agent-Systems (MASs), a consensus algorithm is designed to have the DGs generate their output power in proportion to their maximum capacities under capacity fluctuations. A change in power capacity of a DG triggers the consensus algorithm which uses a communication map at the cyber layer to estimate the corresponding change. During the transient time of reaching a consensus, the delivered power may not match the load power demand. To eliminate this mismatch, a control law is augmented that consists of a finite-time consensus algorithm embedded within the overarching power sharing consensus algorithm. The effectiveness of the distributed controller is assessed through simulation of a microgrid consisting of a realistic model of inverter-based DGs. \vspace{-0.5mm}
\section*{Keywords}
Consensus, Eigenvalue Perturbation, Distributed Control, Finite-Time Consensus, Inverter-based Microgrid, Proportional Power Sharing, Renewable Energy, Transient Control.
\vspace{-4mm}

\section{Introduction}
\label{sec:introduction} 
Environmentally sustainable electrical energy production depends on renewable energy resources. In this regard, significant amount of researches have been undertaken within the past few decades \cite{rocabert2012control, he2017simple, du2018distributed}. Conventionally, control of electric power systems and the main power grid was accomplished through a few central controllers. Through emerging renewable energy plants, intelligent loads located in the demand side and computational advances, distributed energy production and management has become viable. DGs as distributed energy production units, together with local loads which are distinct from the main power, is called a microgrid. Microgrids operate in two different operational modes called grid-connected and islanding. A microgrid is said to work in grid-connected mode when it is connected to the main grid via a tie line at the point of common coupling (PCC) where there exists bidirectional power flow from or into the main grid \cite{zhang2012energy}. In contrast, microgrids in islanding mode generates power for local loads \cite{han2017review}. To deploy small-scale DGs including photovoltaic (PV) cells, wind turbines, fuel cells and energy storage systems (ESSs) in microgrids, power electronics inverters are vital interfaces which connect DGs to the power buses \cite{barklund2008energy}.

There have been extensive studies conducted on control of inverter-based microgrids during the past decades \cite{deng2016enhanced,gui2018improved}. The control strategies can be classified in different categories including frequency and voltage control or power control \cite{fan2016distributed}. Applying these methods also depends on the microgrid's mode of operation. For instance, in the grid-connected mode the frequency and voltage are imposed by the main grid. However, voltage and frequency control are vital in islanding mode.

In this work, power sharing control of microgrids in grid-connected mode is studied \cite{mahmud2014robust}. The problem of power sharing has been studied from the aspect of equal power sharing in \cite{guerrero2005output,guerrero2007decentralized}. Since DGs possess different capacities, the DGs with higher capacities can share more power than the DGs with lower capacities. The power sharing problem becomes challenging under intermittent nature of power resources. Intermittency causes fluctuations in maximum capacity of DGs, which leads to changes in their output power. Thereby, the total power fluctuates, and the load power may not always be maintained. These fluctuations can be addressed by deploying electrical energy storage (EES) or managing the DGs to flexibly address the variations in their capacities.

Other approaches proposed to address the power sharing problem in microgrids can be categorized either as proportional power sharing \cite{fan2016distributed,he2014consensus,aalipour2018proportional},
or economic dispatch problem (EDP) \cite{chen2015distributed}. The studies \cite{aalipour2018proportional} and \cite{zhong2011robust} have proposed techniques for proportional power sharing. Here, proportional power sharing is defined as sharing the load among DGs such that each individual DG shares a fraction of the load in proportion to its maximum capacity. A distributed droop control scheme based on nonlinear state feedback proposed in \cite{fan2016distributed} guarantees that DGs share reactive power proportionally. In \cite{schiffer2015voltage},
through a distributed voltage control, the active and reactive power are shared proportionally, for a microgrid with inductive impedance loads. In addition, \cite{cai2018distributed} formulates the proportional power sharing as a tracking problem and solves it for grid-connected spatially concentrated microgrids. However, \cite{fan2016distributed} and \cite{schiffer2015voltage} study islanding mode, and none of \cite{fan2016distributed,schiffer2015voltage,cai2018distributed} have covered the power mismatch during transient time of their proposed strategies. 

On the other hand, EDP is a method to control the power flow among different DGs optimally, where optimality implies minimizing a quadratic performance index assigned to each DG as the cost of their generated power. EDP has been studied through different techniques including the population dynamic method \cite{pantoja2011population}, and the lambda iteration \cite{lin1984hierarchical}. While these methods have been formulated within a centralized control framework, distributed version of EDP can be found in \cite{chen2015distributed,kar2012distributed}.

Motivated by systems with cyber-physical layers, the power sharing control in this study is devised in two layers. The physical layer that consists of DGs, loads, measurement units, etc., is where the power control loop of each DG is established to track the input power command issued from the cyber layer. DGs have their corresponding agents in the cyber layer. Thus, the ideas of MASs can be utilized to establish the DGs' controllers and their interactions. The agents communicate through a communication network in the cyber layer. 

The agents can choose different strategies to control the DGs including centralized, decentralized or distributed formats. When the DGs are located in a small region, it is viable to apply centralized controllers. As the number of DGs increases, while geographically scattered in a wide area, applying the centralized controllers faces deficiencies due to some reasons; Firstly, the centralized controller is not reliable due to the dependency of the DGs on a single controller where its malfunctions deteriorate the performance of the microgrid or may result in instability. Besides, in centralized coordinated control, transferring data to a control center and issuing control signals back to DGs require high bandwidth communication, which is not economically efficient, or technically secure, and is prone to failure \cite{lou2016distributed}. On the contrary, distributed control techniques require considerably lower bandwidth which makes the communications among the DGs economically viable. Decentralized controllers are applicable locally, however it does not exploit cooperation of DGs \cite{vaccaro2011decentralized}. Therefore, they may not perform efficiently where the global information and cooperation is required. In contrast, a distributed control scheme encompasses the plug and play feature, which it makes it more flexible compared to centralized and decentralized controllers  \cite{liu2018game}. The centralized control scheme depends on global information while DGs in distributed control exchange information exclusively with the DGs in their neighborhood. In this study, the well-known consensus algorithm is utilized to design a distributed controller for the power sharing control problem. 

Considering a large number of DGs scattered in a wide area, it takes agents time to transfer all the required signals. Therefore, it is inevitable to have communication delays in the distributed controllers \cite{chaudhuri2004wide}. The ranges of these delays are from tens to hundreds of milliseconds \cite{wu2004evaluation}. The delays may result in prolonging the convergence time of consensus algorithm and potentially lead to microgrid instability  \cite{zhong2011robust}. The delays can be reduced through increasing the convergence rate of consensus algorithms utilizing approaches including multiplying the weights of the communication graph with a large constant, or through an optimization of the weights \cite{xiao2004fast}. 


This study considers a microgrid operating in grid-connected mode using proportional power sharing. Proportional power sharing makes the microgrid adaptable to intermittency of power sources. The grid-connected operation enables the microgrid to transmit excess power to the main grid, while relying on it for frequency and voltage control. The contributions of this paper are:

1) A distributed consensus algorithm is designed, by which the DGs are able to estimate the microgrid’s power capacity under perturbations in the power capacity of individual DGs. Convergence rate of the algorithm is studied and bounds on allowable perturbations are derived based on practical constraints.

2) Multiple proportional power sharing strategies are proposed, to meet the demanded power as consensus is reached. The strategies are executed in a distributed manner. They ensure that the microgrid satisfies load power variations dynamically and allow excess power to be transmitted to the main grid.

3) Items (1) and (2) enable proportional power sharing. However, during convergence of the consensus algorithm, a power mismatch occurs between the generated and demanded power. Although the rate of convergence can be increased, the transient power mismatch remains inevitable. To eliminate this mismatch, a fully distributed finite-time consensus algorithm, based on \cite{charalambous2015distributed}, is additionally augmented.

4) A realistic simulation is conducted, using MATLAB's Simscape toolbox, with a clear primary controller scheme and corresponding parameters. Grid and DG parameters used in simulations are also given. The simulations and the results can be reproduced by readers, allowing further enhancements in future research.

The rest of this paper is organized as follows. The preliminary definitions of technical terms are explained in section two. Then, proportional power sharing is defined in the third section. In the fourth section, the consensus algorithm is developed through which the DGs are able to update their information about the total microgrid power capacity following a change in a DG's capacity. The overarching consensus algorithm and the embedded transient controller are proposed and elaborated in the same section. Fifth section discusses the cyber and physical layers which control the output power of DGs. Next, simulation results are provided in section six to illustrate the effectiveness of the proposed control plan in response to different variations in capacity of a DG. Finally, concluding remarks are provided and references are listed. \vspace{-2mm}

\section{Preliminary Definitions} \label{Sec: Preliminary Definitions} \hspace{-2mm}
We define the graph $\mathcal{G}$ as the set pair $(\mathcal{\nu},\mathcal{\varepsilon})$ having vertices set $\mathcal{\nu}$ and edge set $\mathcal{\varepsilon}$. Let the number of vertices in $\mathcal{G}$ be $N$, and let the set $\mathcal{\varepsilon}$ consist of the vertices pairs $(i,j)$ for which there exists an edge that connects $j$ to $i$, with $i,j=1,2,\cdots,N$ and $i \neq j$. The intended graph in this study is undirected or bidirectional graph, where the signals flow along edges in both directions, i.e. if $(i,j) \in \mathcal{\varepsilon}$, then $(j,i) \in \mathcal{\varepsilon}$. The adjacency matrix associated with the graph is $A=[a_{ij}] \in R^{N \times N}$ where each element $a_{ij}>0$ if $(i,j) \in \mathcal{\varepsilon}$, otherwise $a_{ij}=0$. As stated above, in the bidirectional graph $\mathcal{G}$, if $(i,j) \in \mathcal{\varepsilon}$, then $(j,i) \in \mathcal{\varepsilon}$, and $a_{ij} = a_{ji}$. Then $A$ is symmetric, i.e. $A=A^T$. We define the degree matrix $D=[d_{ii}] \in R^{N \times N}$ as a diagonal matrix as such \vspace{-1.5mm}
\begin{equation}
d_{ii}=\sum_{j=1}^{N} a_{ij} \vspace{-1.5mm}
\label{eq_rev1} 
\end{equation}
The matrix $L=[l_{ij}]=D-A$ is denoted as the Laplacian Matrix of $\mathcal{G}$. As mentioned above, $A=A^T$, and considering $D$ is a diagonal matrix, it follows that $L=L^T$. The neighbor set corresponding to each vertex $i$ is defined as $\mathcal{N}_{\;\;i}=\{j | (i,j) \in \mathcal{\varepsilon}\}$. Additionally, $\mathcal{N}_{\;\;j}^{+}$ denotes the set of outgoing neighbors of node $j$, i.e., the set of nodes receiving signals form the node $j$, and $\mathcal{N}_{\;\;j}^{-}$ is the set of nodes which sends signals to the node $j$. For the bidirectional graph $\mathcal{G}$, $\mathcal{N}_{\;\;j}^{+} = \mathcal{N}_{\;\;j}^{-}$. A graph is connected if there exists a path between any two distinct vertices \cite{dorfler2018electrical}. We assume that $\mathcal{G}$ is connected.

Next, consider \cref{Fig: Microgrid Schematic} which shows a sample localized microgrid with four DGs, denoted by DG$_i$, $i=1,2,3,4$. In this figure, the dashed lines show signaling between the cyber layer and physical layer, i.e. the communications between the DGs and their corresponding agents in the cyber layer. The lines with bidirectional arrows represent communications among the corresponding agents of DG$_i$ located in the cyber layer. The solid lines are electrical connections. Based on the weights shown in \cref{Fig: Microgrid Schematic} and the explanations above, the adjacency and degree matrices are defined as,

\small
\begin{equation} \label{eq: adjacency and diagonal}
\begin{aligned}
\!\!\!\!\!\!\!\!&A\!\!=\!\!\!\!\left[\begin{array} {cccc}  \!\!\!\! 0 & \!\!\!\! 0 & \!\!\!\! a_{13} & \!\!\!\! 0 \!\!\!\!\!\!\\ \!\!\!\! 0 & \!\!\!\! 0 & \!\!\!\! a_{23} & \!\!\!\! a_{24}\!\!\!\!\!\!\\ \!\!\!\!a_{31} & \!\!\!\! a_{32} & \!\!\!\! 0 & \!\!\!\! 0 \!\!\!\!\!\!\\ \!\!\!\! 0 \!\!\!\! & \!\!\!\!a_{42} & \!\!\!\! 0 \!\!\!\! & \!\!\!\! 0 \!\!\!\!\!\! \end{array} \right]\!\!\!, 
D\!\!=\!\!\!\!\left[\begin{array} {cccc} \!\!\!\! \! a_{13} & \!\!\!\!\!\!\!\! 0 & \!\!\!\!\!\!\!\! 0 & \!\!\!\!\!\!\!\! 0 \!\!\!\!\!\!\\ \!\!\!\!\! 0 & \!\!\!\!\!\!\!\! a_{23}+a_{24} & \!\!\!\!\!\!\!\! 0 & \!\!\!\!\!\!\!\! 0 \!\!\!\!\!\! \\ \!\!\!\!\!0 & \!\!\!\!\!\!\!\! 0 & \!\!\!\!\!\!\!\! a_{31}+a_{32} & \!\!\!\!\!\!\!\!0 \!\!\!\!\! \\ \!\!\!\!\! 0 & \!\!\!\!\!\!\!\! 0 & \!\!\!\!\!\!\!\! 0 & \!\!\!\!\!\!\!\! a_{42}\!\!\!\!\!\end{array}\right] \hspace{-7mm}
\end{aligned}
\end{equation}
\normalsize
Based on the definition of Laplacian matrix, the corresponding Laplacian matrix to the adjacency and diagonal matrices defined in (\ref{eq: adjacency and diagonal}) is 
\begin{equation}
\begin{aligned}
L=\left[\begin{array} {cccc} \!\!a_{13} &\!\! 0 & \!\! -a_{13} &\!\! 0 \\ \!\! 0 & \!\! a_{23} + a_{24} & \!\! -a_{23} & \!\! -a_{24} \\ \!\! -a_{31} & \!\! -a_{32} & \!\! a_{31} + a_{32} & \!\! 0 \\ 0 & \!\! -a_{42} & \!\! 0 & \!\! a_{42} \end{array}\right]
\end{aligned}   \vspace{-2mm}
\label{eq_rev2}
\end{equation}

\section{Problem Definition} \label{Sec: Power Defenition}
We consider a microgrid in the grid-connected mode, where the microgrid's voltage and frequency are imposed by the main grid, i.e. the microgrid's frequency and voltage are fixed. Hence, the goal in this mode is to control the output power of the DGs. The cyber-physical systems considered in this paper is similar to the one shown in \cref{Fig: Microgrid Schematic}. The proposed control emerges from consensus control of Multi-Agent Systems (MAS). The control objective is sharing load power in proportion to the maximum power capacity of the DGs, under variations in maximum capacities. 
\begin{figure}
	\begin{center} 
		\includegraphics[width=0.7\textwidth]{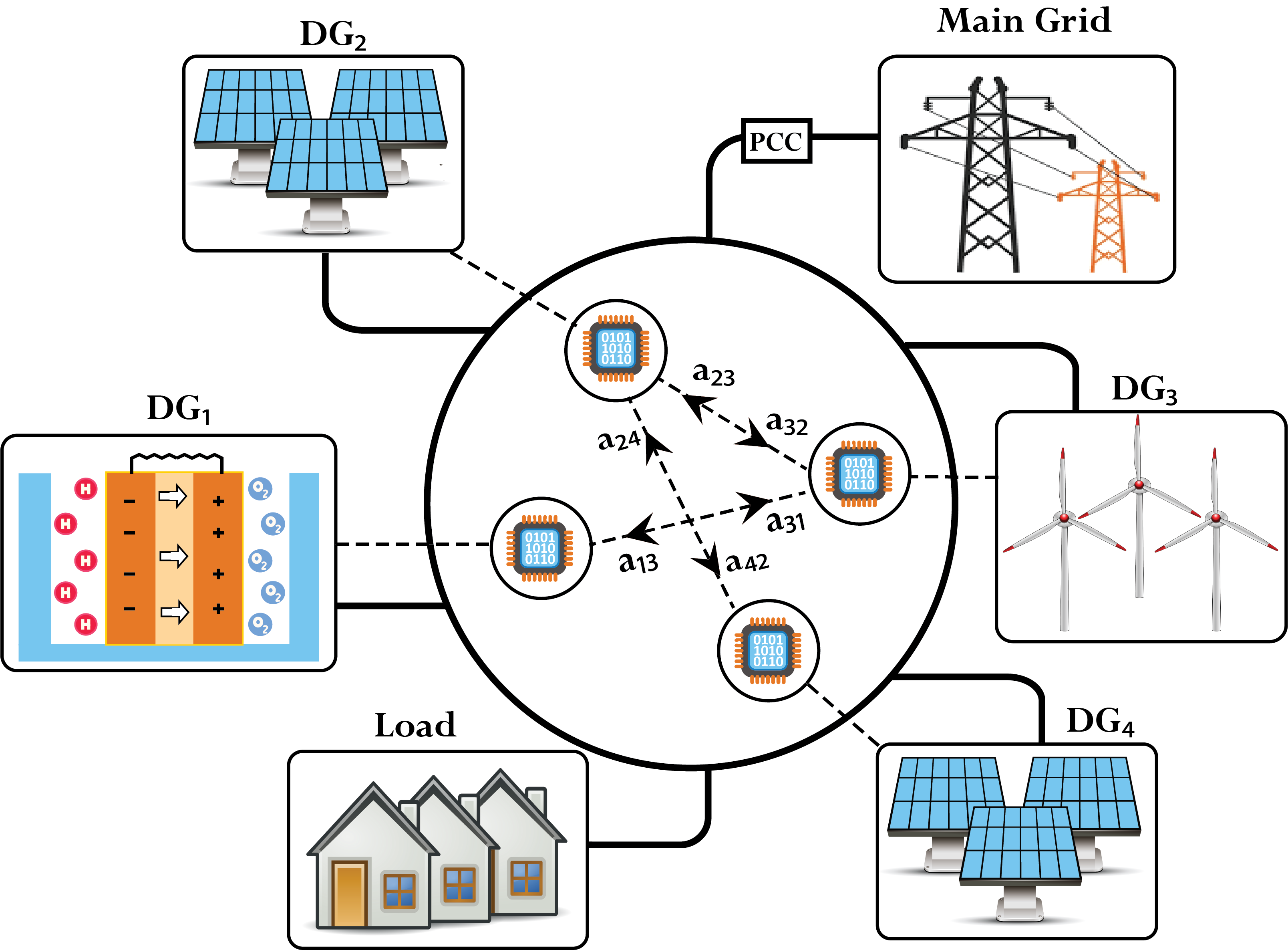}
	\end{center}
	\caption{Schematic of a microgrid comprising cyber and physical layers}
	\label{Fig: Microgrid Schematic} \vspace{-2mm}
\end{figure}
We assume that there exists $N$ DGs in a microgrid which are labeled as DG$_i$ where $i=1, 2,\cdots, N$. The maximum power capacity and instantaneous output power of each DG$_i$ are defined as $P_{i, max}$ and $P_i$, respectively. Let $P_L$ be the load power, which is proportionately shared among the DGs, i.e. \vspace{-1mm}
\begin{equation} \label{Eq: Power share ratio definition}
P_L=\sum_{i=1}^{N} P_{i}, \,\,\, \mbox{s.t.} \,\,\, r=\frac{P_{1}}{P_{1, max}}=\frac{P_{2}}{P_{2, max}}=\cdots=\frac{P_{N}}{P_{N, max}} \vspace{-1mm}
\end{equation} 
where $r$ is the proportional power share ratio. Thus, the output power of DG$_i$ is $P_i=rP_{i,max}$. Let $P_{T}$ be the total power capacity of the microgrid defined as the accumulation of the maximum power capacity of all the DGs in the microgrid. Then, one can conclude that 
\vspace{-0.5mm}
\begin{equation} \label{Eq: Proportional Ratio Root} 
r=\frac{\sum_{i=1}^{N} P_{i}}{\sum_{i=1}^{N} P_{i, max}}=\frac{P_L}{P_{T}} \vspace{-1mm}
\end{equation}
and the output power of DG$_i$ is $P_i=(P_L/P_T)\,P_{i,max}$. Note that a fluctuation in the maximum capacity of a DG $P_{i,max}$ or a change in $P_L$ will cause a change in $r$. The proposed power sharing control will, in response, manage the output power of the DGs flexibly. Throughout this study, the goal is to manage the variations of $P_{i}$s while meeting the requirement $P_L=\sum_{i=1}^{N} P_{i}$. It is assumed that all DGs have a knowledge of $P_L$ at all times. The power demand $P_L$ can vary with time.

We next explain two scenarios for which different controllers are designed. At the core of these controllers is a consensus algorithm which is inherently distributed. Recall that an underlying assumption is that the communication graph among the DGs is connected. Before any change happens to the renewable energy resources, we assume all DGs have the knowledge of $P_T$ by which they are able to compute $r$ from (\ref{Eq: Proportional Ratio Root}) and thereby generate their appropriate proportional power share $P_i=rP_{i,max}$, $i=1,2,\cdots,N$. 

In the first scenario, assume the maximum capacity of DG$_k$ which is $P_{k,max}$, changes. Then $P_T$ changes accordingly and all DGs are required to update their value of $P_T$ to be able to recalculate the new $r$ based on (\ref{Eq: Proportional Ratio Root}). The only DG that can generate accurate power immediately after a fluctuation happens is the DG$_k$ since it is aware of the change in $P_{k,max}$. Let $\delta$ be the change such that $\tilde{P}_{k,max}=P_{k,max}+\delta$, where $\tilde{P}_{k,max}$, is the updated value of $P_{k,max}$. Thus, DG$_k$ can compute the updated capacity of the microgrid as $\tilde{P}_{T}$ where  $\tilde{P}_{T}=P_{T}+\delta$ and recalculate $r$ and the delivered power $P_k$. A consensus algorithm is devised to have other DGs compute the $\tilde{P}_{T}$ and thereby reach the new value of $r$, distributively.

In the second scenario, we address the mismatch between load and supplied power before consensus is reached. As was discussed in scenario (1), only DG$_k$ can generate an accurate amount of power instantaneously after a fluctuation in DG$_k$. Although the other DGs are able to update $P_i$ following a change in $P_{k,max}$, the consensus algorithm takes time to converge, and hence during the transient time $\sum_{i=1}^{N} P_{i}$ would not necessarily be equal to $P_L$. The reason is that the other DGs do not have the correct value of $\tilde{P}_{T}$ instantaneously. However, since instantaneous matching of load power is a priority, a control law is augmented with the consensus algorithm to practically remove power mismatch during transients.  
\vspace{-10mm}
\section{Distributed Microgrid Control} \label{sec: sec4}

\subsection{Consensus on Total Power Capacity under Perturbation} \label{Sec: Static Consensus}
We consider a scenario where the individual DGs know the power ratio $r$ and generate accurate $P_i$ based on proportional power sharing, as shown in (\ref{Eq: Power share ratio definition}). Hence, each DG has correct knowledge of $P_T$, as per (\ref{Eq: Proportional Ratio Root}). Next, consider a change in $P_{k,max}$ to $\tilde{P}_{k,max}=P_{k,max}+\delta$. Following this change, all agents are required to compute $\tilde{P}_{T} = P_T + \delta$, the updated value of $P_T$. We define $s_i(t)$ as the estimate of $\tilde{P}_T$ by DG$_i$. The vector of estimate variables is then, $\mathbf{S}(t)=\left[\begin{array} {ccccc}s_1&s_2&\cdots&s_N \end{array} \right]^T$, where $N$ is the number of the DGs in the microgrid. As mentioned above, all the DGs know $P_T$ before any change happens. Therefore, the initial value of $\mathbf{S}$ is, $\mathbf{S}(0)=P_T \textbf{1}$ where $\textbf{1}^{N}=\left[\begin{array} {ccccc} 1&1&\cdots&1\end{array} \right]^T$. Thereafter, we propose the following consensus dynamics in the cyber layer, through which all DGs update their value of $P_T$ and converge to $\tilde{P}_{T}$.
\vspace{-0.5mm}
\begin{equation} 
\begin{aligned}  \label{Eq: Agents Communication Dynamics}
&\dot{s}_k(t)=\!-h\left(s_k(t)-\tilde{P}_{T}\right)-\!\!\!\! \sum_{j \in \mathcal{N}_{\;\;k}} \!\!\!\! a_{kj}\left(s_k(t)-s_j(t)\right) \!\!\\
&\dot{s}_i(t)=\!-\!\!\!\!\sum_{j \in \mathcal{N}_{\;\;i}} \!\!\!\! a_{ij}\left(s_i(t)-s_j(t)\right), \,\, i=\!1,2,...,N, \, i \neq k
\end{aligned} \vspace{-1mm}
\end{equation}
where $s_k(0)=P_{T}$ and $s_i(0)\!=P_{T}$. In (\ref{Eq: Agents Communication Dynamics}), $a_{ij}>0$ and it denotes the weight of the communication link between agents $i$ and $j$, where $i,j=1,2,\cdots,N$, $i \neq j$, and $h>0$ is a parameter chosen by the $k^{th}$ agent. The parameter $h$ represents a measure of the convergence rate of $s_i(t)$ towards $\tilde{P}_T$. Since the communication graph is bidirectional, therefore $a_{ij}=a_{ji}$, and this implies that the Laplacian matrix is symmetric, i.e. $L=L^T$ (see example in (\ref{eq_rev2}). From (\ref{Eq: Agents Communication Dynamics}), the following matrix equation is obtained
\begin{equation} \label{Eq: Matrix Format of the Agents Communication Dynamics}
\dot{\textbf{S}}=-(L+\Delta)\mathbf{S}+hd_k \tilde{P}_{T}
\end{equation}
where 
\begin{equation}
d_k=\!\! \left[\begin{array} {ccccccccc} \sigma_{1,k} \\ \sigma_{2,k} \\ \vdots \\ \sigma_{k,k} \\ \vdots \\ \sigma_{N,k}  \end{array} \right]\!\!\! = \!\! \left[\begin{array} {ccccccccc} 0 \\ 0 \\ \vdots \\ 1 \\ \vdots \\ 0  \end{array} \right]\!\!\! \in \! R^{N \times 1},
\Delta= h d_k d^T_k \in \!R^{N \times N}
\label{eq_rev3}
\end{equation}
In (\ref{eq_rev3}), $\sigma_{i,k}$ is the {\it Kronecker Delta} function, implying the $k^{th}$ element of $d_k$ is one and rest are zero. Also, $\Delta_{k, k} = h$ and all other elements are zero. We now propose and prove the following Lemma.
\begin{lemma}  \label{Lem: Lemma 1}
	The linear dynamic system defined in (\ref{Eq: Matrix Format of the Agents Communication Dynamics}) and (\ref{eq_rev3}) is input-to-state stable (ISS), and $S\rightarrow \tilde{P}_{T} \textbf{1}$ given the graph of communication among the agents is connected.
	\label{lem_rev1}
\end{lemma}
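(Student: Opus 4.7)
The plan is to prove the lemma by combining two standard ingredients: (i) showing that the state matrix $-(L+\Delta)$ is Hurwitz, which gives both ISS and exponential convergence to a unique equilibrium, and (ii) checking by direct substitution that $\tilde{P}_T\mathbf{1}$ is the equilibrium. Since $L$ is the Laplacian of an undirected connected graph it is real, symmetric, and positive semidefinite with a one-dimensional null space spanned by $\mathbf{1}$, and since $\Delta = h\, d_k d_k^T$ with $h>0$ it is also symmetric and positive semidefinite (rank one). Hence $L+\Delta$ is symmetric positive semidefinite, and the whole question reduces to excluding a zero eigenvalue.

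To do that, I will suppose $(L+\Delta)x = 0$ for some $x \in \mathbb{R}^N$. Left-multiplying by $x^T$ gives $x^T L x + h (d_k^T x)^2 = 0$. Both terms are nonnegative, so both vanish. The first, $x^T L x = 0$, together with connectedness of $\mathcal{G}$, forces $x \in \mathrm{span}(\mathbf{1})$, i.e.\ $x = c\mathbf{1}$. Plugging into the second term yields $h c^2 = 0$, so $c=0$ and hence $x=0$. This is the crux of the argument: the single "pinning" term at node $k$ is exactly what removes the otherwise invariant consensus subspace $\mathrm{span}(\mathbf{1})$ from the kernel of $L$. Therefore $L+\Delta \succ 0$, all its eigenvalues are strictly positive, and $-(L+\Delta)$ is Hurwitz.

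With a Hurwitz state matrix and a constant (hence bounded) forcing $h d_k \tilde{P}_T$, ISS follows from standard linear systems arguments: the solution of $\dot{\mathbf{S}} = -(L+\Delta)\mathbf{S} + h d_k \tilde{P}_T$ can be bounded as $\|\mathbf{S}(t)\| \le \kappa e^{-\alpha t}\|\mathbf{S}(0)\| + \gamma \|h d_k \tilde{P}_T\|$ with $\alpha = \lambda_{\min}(L+\Delta) > 0$, giving the ISS estimate with a linear gain. For the asymptotic value I will compute the unique equilibrium by setting $\dot{\mathbf{S}}=0$, which yields $(L+\Delta)\mathbf{S}_{\text{eq}} = h d_k \tilde{P}_T$, and then verify by direct substitution that $\mathbf{S}_{\text{eq}} = \tilde{P}_T\mathbf{1}$ solves this: $L\mathbf{1}=0$ and $\Delta \mathbf{1} = h d_k (d_k^T \mathbf{1}) = h d_k$. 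Invertibility of $L+\Delta$ makes this equilibrium unique, so $\mathbf{S}(t) \to \tilde{P}_T \mathbf{1}$ as $t \to \infty$.

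The only nontrivial step is the positive-definiteness of $L+\Delta$; everything else is routine. I anticipate that this is also the step the authors will emphasize, since the same algebraic mechanism (a single local pinning term breaking the consensus null space of the Laplacian) is what makes the whole distributed update scheme converge to the correct value of $\tilde{P}_T$ rather than to an arbitrary constant in $\mathrm{span}(\mathbf{1})$.
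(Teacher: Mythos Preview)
Your proposal is correct and follows essentially the same approach as the paper: both arguments establish that $-(L+\Delta)$ is Hurwitz by showing the symmetric matrix $L+\Delta$ is positive definite via the observation that $x^{T}Lx=0$ forces $x\in\mathrm{span}(\mathbf{1})$ while the pinning term $h(d_k^{T}x)^2$ then forces $x=0$, after which convergence to $\tilde{P}_T\mathbf{1}$ follows. The only cosmetic differences are that the paper carries out the positive-definiteness step through an explicit eigendecomposition $L=V\Lambda V^{T}$ and a contradiction, and obtains the limit via the change of variables $y=\mathbf{S}-\tilde{P}_T\mathbf{1}$ rather than your direct equilibrium computation.
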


\begin{proof}
	The linear system of \eqref{Eq: Matrix Format of the Agents Communication Dynamics} and \eqref{eq_rev3} is ISS if $-(L+\Delta)$ is Hurwitz \cite{khalil2002nonlinear}. 
	The input is $\tilde{P}_{nom}$ which is constant and bounded, thus, if $-(L+\Delta)$ is Hurwitz the proof is complete. Since $L=L^T$, and by definition $\Delta=\Delta^T$, the matrix $-(L+\Delta)$ is symmetric. Hence, it is Hurwitz if $-(L+\Delta)<0$, i.e. negative definite. To prove this, it is required to show that for any vector $u \in R^{N}$, $u^T[-(L+\Delta)]u$ is strictly less than zero unless $u=0$. From \cref{eq_rev3}, 
	\begin{equation} \label{Eq: Negative definiteness test}
	u^T[-(L+\Delta)]u = -u^TLu - u^T \Delta u =-u^TLu- h{u_k}^2  
	\end{equation}
	where $h=\Delta_{kk}>0$, and $u_k$ is the $k^{th}$ element of the vector $u$. As the communication graph is connected, $L$ is positive semi-definite with a single zero eigenvalue \cite{mesbahi2010graph,dorfler2018electrical}, and it is diagonalizable \cite{horn1990matrix}, with all real eigenvalues. Let $\lambda_i$, $i=1,2,\cdots,N$ be the eigenvalues of $L$ in descending order, $\lambda_1 \geq \lambda_2 \geq \cdots>\lambda_{N-1} > \lambda_{N}=0$. The canonical form of $L$ is $L=V\Lambda V^T$, where $\Lambda$ is a diagonal matrix consisting of the eigenvalues of $L$, and $V$ is the right eigenvector-matrix,
	\begin{equation} \label{Eq: Diagonal Form of L}
	\Lambda \!\! =\!\!\left[\begin{array} {ccccc} \lambda_1&0&&0\\ 0&\lambda_2&&\\&&\ddots&\\0&&&\lambda_N \end{array} \right]\!\!, \,\, V\!\!=\left[\begin{array} {ccccc}\!\!\! v_1 & \!\!\! v_2 & \!\!\! \cdots & \!\!\! v_N \!\!\!\end{array} \right] 
	\end{equation}
	Since $v_N$ is the eigenvector corresponding to $\lambda_N = 0$, following the definition of $L$, $v_N=c \textbf{1}^N$ where $c \neq 0$ is a real value. Substituting $L=V\Lambda V^T$ into (\ref{Eq: Negative definiteness test}) and taking  (\ref{Eq: Diagonal Form of L}) into account, the following holds:
	\begin{equation} \label{Eq: Negative definiteness test 1} 
	\begin{aligned}
	-u^TLu-h{u_k}^2=& -z^T\!\! \left[\begin{array} {ccccc} \!\! \lambda_1 & \!\!\!\! 0 & \!\!\!\!& \!\!\!\! 0\!\!\\ \!\! 0 & \!\!\!\! \lambda_2 & \!\!\!\! & \!\!\\ \!\! & \!\!\!\! & \!\!\!\!\ddots & \!\! \\ \!\! 0 \!\!\!\! & \!\!\!\! & \!\!\!\! & \!\!\!\! \lambda_N \!\! \end{array} \right] z -h {u_k}^2 \\ 
	&=-\sum_{i=1}^{N} \lambda_i z_i^2 -h {u_k}^2
	\end{aligned}
	\end{equation} 
	where $z=V^Tu$. Since $V^T$ is a nonsingular matrix it is invertible, and its inverse matrix is $V$ and $u=Vz$. For any $z=\left[\begin{array} {ccccc} z_1&z_2&\cdots&z_N \end{array} \right]^T$ and $u_k$, the right hand side of (\ref{Eq: Negative definiteness test 1}) is negative, except $z_i=0 \; \forall \; i=1,2,\cdots,N-1$ and $u_k=0$. The remaining condition is $z=\left[\begin{array} {ccccc} 0&\cdots&0&z_N \end{array} \right]^T$. As $V$ is not singular, $u=Vz \neq 0$ while $z_N \neq 0$. According to (\ref{Eq: Negative definiteness test 1}) and since $v_N=c\textbf{1}^N$, 
	\begin{equation}
	u=Vz=\!\left[\begin{array} {ccccc}\!\!\!\!v_1 & \!\!\! v_2 & \!\!\! \cdots &\! \! \! c\textbf{1}\!\!\!\! \end{array} \right] \left[\begin{array} {ccccc} \!\!\!\! 0 & \!\!\! \cdots & \!\!\! 0 & \!\!\! z_N \!\!\!\!\end{array} \right]^T\!\!=cz_N \textbf{1} 
	\end{equation}
	Now that $u=cz_N \mathbf{1}$ and $c,z_N \in R$ are non-zero, $u_k=cz_N \ne 0$. However, this is in contradiction with the assumption made before which is $u_k=0$. Thus, (\ref{Eq: Negative definiteness test 1}) is negative for any vector $z$ and since $u=Vz$, (\ref{Eq: Negative definiteness test}) is negative for any $u \neq 0$ which proves $-(L+\Delta)$ is negative definite. We define $y=\mathbf{S}-\tilde{P}_{T}\textbf{1}$, therefore, $\mathbf{S}=y+\tilde{P}_{T}\textbf{1}$ and $\dot{\mathbf{S}}=\dot{y}$. By substituting $y$ and $\dot{y}$ into (\ref{Eq: Matrix Format of the Agents Communication Dynamics}), we obtain
	\begin{equation} \label{Eq: Simplified Consensus 1}
	\dot{y}=\!-(L+\Delta)y,\, y=\mathbf{S}-\tilde{P}_{T}\textbf{1},\, y(t_0)\!=\!P_{T}\textbf{1}-\tilde{P}_{T}\textbf{1}\!=\!-\delta \textbf{1} \!\!\!\!
	\end{equation}
	As $-(L+\Delta)$ is Hurwitz the dynamics of (\ref{Eq: Simplified Consensus 1}) is exponentially stable. It means $y \rightarrow 0$ and therefore $\mathbf{S} \rightarrow \tilde{P}_{T}\textbf{1}$. This completes the proof.
\end{proof}

We assume that at any time, only one DG can change its power capacity. Thereafter, any subsequent change in power capacity can be done once the consensus algorithm described above has converged. 

\subsection{Observations on Rate of Convergence}

From Weyl's theorem on eigenvalue inequalities for sum of two Hermitian matrices \cite{horn1990matrix}, the properties of Laplacian matrices in a connected network \cite{mesbahi2010graph}, and \cref{Lem: Lemma 1}, we have 
\begin{equation*}
\lambda_{N-1}(-L) \leq \lambda_N(-(L+\Delta)) < \lambda_N(-L)=0
\end{equation*}
assuming the eigenvalues of $-L$ are ordered as $\lambda_1(-L) \leq \lambda_2(-L) \leq \cdots\leq \lambda_{N-1} (-L)\leq \lambda_N(-L)=0$, and those of $-(L+\Delta)$ are ordered as $\lambda_1(-(L+\Delta)) \leq \lambda_2(-(L+\Delta)) \leq \cdots\leq \lambda_{N-1} (-(L+\Delta)) \leq \lambda_N(-(L+\Delta)) < 0$ for all $h > 0$. The eigenvalue $\lambda_N(-(L+\Delta))$ is the dominant eigenvalue of $-(L+\Delta)$ and hence it determines the rate of convergence of consensus. To explain the effect of $h$ on convergence rate, we provide the following lemma.

\begin{lemma}
	The rate of convergence of the consensus algorithm, given by (\ref{Eq: Matrix Format of the Agents Communication Dynamics}) and (\ref{eq_rev3}), increases with increasing values of $h > 0$.
\end{lemma}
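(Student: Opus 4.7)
My plan is to identify the convergence rate with the smallest eigenvalue of $L+\Delta$ and then establish monotonicity in $h$ via the Courant–Fischer variational characterization. Since $-(L+\Delta)$ is symmetric and Hurwitz by \cref{lem_rev1}, the asymptotic decay rate of \eqref{Eq: Matrix Format of the Agents Communication Dynamics} is governed by the magnitude of its dominant eigenvalue $\lambda_N(-(L+\Delta)) < 0$, which equals $-\mu(h)$ with
\[
\mu(h) \;:=\; \lambda_{\min}\bigl(L + h\,d_k d_k^T\bigr) \;>\; 0.
\]
So the lemma reduces to showing that $\mu(h)$ grows with $h$.

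Next I would apply the Courant–Fischer theorem to write
\[
\mu(h) \;=\; \min_{\|v\|=1}\,\bigl[\, v^T L v + h\, v_k^2\,\bigr],
\]
where $v_k$ is the $k$-th component of $v$. For each fixed unit vector $v$, the bracketed quantity is affine in $h$ with nonnegative slope $v_k^2$, hence non-decreasing in $h$; the pointwise minimum over $v$ inherits this property, so $h_1 \le h_2$ implies $\mu(h_1) \le \mu(h_2)$. This delivers the monotone dependence of the convergence rate on $h$ and already motivates the claim.

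The step I expect to be most delicate, if strict monotonicity is wanted, is to rule out that $\mu$ is flat on any subinterval. For this I would use first-order perturbation theory: by \cref{lem_rev1} together with the Weyl bound $\mu(h) \le \lambda_{N-1}(L)$ stated immediately before this lemma, the eigenvalue $\mu(h)$ is simple and separated from the remainder of the spectrum of $L+\Delta$, so the associated unit eigenvector $v(h)$ depends smoothly on $h$, and the Hellmann–Feynman identity gives $\mu'(h) = v_k(h)^2 \geq 0$. The sole obstacle is excluding $v_k(h)=0$, which would force $v(h)$ to be an eigenvector of $L$ itself orthogonal to $d_k$. I would preclude this by plugging the direction $\mathbf{1}/\sqrt{N}$ (which lies in $\ker L$ because $\mathcal{G}$ is connected) into the variational formula, obtaining $\mu(h) \le h/N$; on the physically relevant range $h < N\lambda_{N-1}(L)$ this forces $\mu(h)$ strictly below every nonzero eigenvalue of $L$, so $\mu(h)$ cannot coincide with an eigenvalue of $L$ and $v_k(h)\neq 0$. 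Combined with the weak monotonicity established in the previous paragraph, this upgrades the conclusion to strict increase of $\mu$, and therefore of the convergence rate, in $h$.
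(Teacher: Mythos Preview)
Your route differs genuinely from the paper's. The paper expands $\det(\lambda I + L + \Delta)$ as $p(\lambda) + h\,q(\lambda)$ with $p$ the characteristic polynomial of $-L$ and $q$ monic of degree $N-1$, recasts the eigenvalue condition in root-locus form $1 + h\,q(\lambda)/p(\lambda) = 0$, infers from \cref{lem_rev1} that all zeros of $q$ are real and negative with the largest one in $[\lambda_{N-1}(-L),0)$, and concludes that the dominant branch lies on the negative real axis and moves leftward from $0$ toward that zero as $h$ grows. Your Courant--Fischer argument delivers weak monotonicity of $\mu(h)=\lambda_{\min}(L+\Delta)$ in a single line valid for every $h>0$, which is more direct than the root-locus bookkeeping; what the paper's approach buys in return is that strict monotonicity falls out for all $h>0$ with no side condition on $h$.

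Two issues remain in your strict-monotonicity step. First, simplicity of $\mu(h)$ does not follow from \cref{lem_rev1} and the Weyl bound as you assert: rank-one interlacing gives only $\mu_1(h)\le\lambda_{N-1}(L)\le\mu_2(h)$, which permits $\mu_1=\mu_2$. It is your later bound $\mu(h)\le h/N$ that, on $h<N\lambda_{N-1}(L)$, forces $\mu_1(h)<\lambda_{N-1}(L)\le\mu_2(h)$ and hence both simplicity and $v_k(h)\ne0$; the logic should be reordered so this bound precedes the Hellmann--Feynman step. Second, the restriction to $h<N\lambda_{N-1}(L)$ is a genuine gap relative to the lemma as stated: labeling the range ``physically relevant'' does not discharge the claim for all $h>0$. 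A clean fix within your framework is the secular equation $1/h = d_k^{T}(\mu I - L)^{-1}d_k$; the right-hand side is strictly decreasing on the interval to the right of $0$ and tends to $+\infty$ as $\mu\to0^{+}$ because $(v_N)_k=1/\sqrt{N}\ne0$, so $h\mapsto\mu$ is a strictly increasing bijection from $(0,\infty)$ onto an interval $(0,\mu^{\star})$ with $\mu^{\star}\le\lambda_{N-1}(L)$. This is, in effect, the analytic content of the paper's root-locus branch.
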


\begin{proof}
	The characteristic equation of $-L$ can be expressed as $p(\lambda) = 0$. The roots of $p(\lambda)$, i.e. $\lambda_1(-L) \leq \lambda_2(-L) \leq \cdots\leq \lambda_{N-1} (-L)\leq \lambda_N(-L)=0$, are all real. It can be verified that the characteristic equation of $-(L + \Delta)$ takes the form $p(\lambda) + h q(\lambda)= 0$. Further, $p(\lambda)$ and $q(\lambda)$ are both monic polynomials, with $p(\lambda)$ and $q(\lambda)$ being polynomials of order $N$ and $(N-1)$ respectively. From \cref{lem_rev1}, we know that the roots of $p(\lambda) + h q(\lambda)$ satisfy $\lambda_1(-(L+\Delta)) \leq \lambda_2(-(L+\Delta)) \leq \cdots\leq \lambda_{N-1} (-(L+\Delta)) \leq \lambda_N(-(L+\Delta)) < 0$, $\forall \; h > 0$. Considering the characteristic equation of $-(L + \Delta)$ in root-locus form, i.e.
	\begin{equation*}
	1 + h \frac{q(\lambda)}{p(\lambda)} = 0, \quad h > 0,
	\end{equation*}
	we deduce that all roots of $q(\lambda)$ are real and negative. Otherwise, increasing $h$ will eventually cause some eigenvalues of $-(L + \Delta)$ to become complex conjugates and/or unstable, thereby contradicting \cref{lem_rev1}. This observation is in accordance with the rules of root locus. We further deduce that the largest root of $q(\lambda)$, namely $\lambda_{max}(q(\lambda))$, must satisfy
	\begin{equation*}
	\lambda_{N-1}(-L) \leq \lambda_{max}(q(\lambda)) < \lambda_N(-L)=0
	\end{equation*}
	Violating the above condition would also contradict \cref{lem_rev1}. Since $\lambda_N(-L)=0$ and $\lambda_{max}(q(\lambda)) < 0$ are the largest roots of $p(\lambda)$ and $q(\lambda)$ respectively, there is one branch of root locus that originates from $\lambda_N(-L)$ and ends at $\lambda_{max}(q(\lambda))$ for $0 \le h < \infty$. This branch is also the closest to the origin of all $N$ branches of the above root locus (which are all strictly along the negative real axis), and hence contains the locus of the dominant eigenvalue of $-(L + \Delta)$. With increasing $h$, this eigenvalue moves further to the left of the origin, thereby increasing the rate of convergence to the consensus proposed in \cref{lem_rev1}.  This completes the proof.
\end{proof}




\subsection{Proportional Power Sharing Strategies} \label{sec_rev2}
The consensus algorithm of \cref{Sec: Static Consensus} enable the DGs to compute the updated capacity of the microgrid under perturbation. In this section, we propose methods by which individual agents command power to the physical layer based on consensus. Subsequent to a capacity variation such as $\delta$ in \cref{Sec: Static Consensus}, three slightly different strategies are proposed through which the DGs meet the load demand $P_L$. The first and third strategies are discussed in detail. The second strategy is similar to the first and hence its details are omitted. Assuming at $t=t_0$, $P_{k,max}$ changes to $\tilde{P}_{k,max}=P_{k,max}+\delta$, the first strategy to generate $P_i$s is
\begin{equation} \label{Eq: Strategy 1}
\mbox{Strategy \!\!\! 1:} \left\{ \begin{aligned}
&\!P_k=\!\frac{P_L}{s_k} \tilde{P}_{k,max} \\
&\!P_i=\!\frac{P_L}{s_i} P_{i,max} \,\,\,\mbox{for} \,\,\; i=1,2,\cdots,N \; i\neq k \!\!\!\!
\end{aligned} \right.
\end{equation}
where $s_i$, $i=1,2,\cdots,N$, are the estimates of $\tilde{P}_T$, as discussed in \cref{Sec: Static Consensus}, and $\lim_{t \to \infty} s_i = \tilde{P}_T$ according to \cref{lem_rev1}. A potential issue may arise when $s_i(t)$ crosses or approaches zero for some $t>t_0$ such that $P_i$ diverges. In this regard, we state and prove the following Lemma: 
\begin{lemma} \label{lem: Lemma2}
	Considering the LTI system defined in (\ref{Eq: Matrix Format of the Agents Communication Dynamics}), if $|\delta|<{\theta P_T}/{(1+\sqrt{N}})$ with $0 < \theta < 1- (P_L/P_T)$, the following holds
	\begin{equation} \label{Eq:limits of yi+tildeP}
	(1\!-\theta) P_T\!\! \leq s_i(t) \! \leq \!\! (1\!+\theta) P_T, \,\, \,\, P_i(t)\!\!<\!P_{i,max}, \,\,\, \forall \, t \!> t_0 \!\!\!
	\vspace{0.1in}
	\end{equation}
\end{lemma}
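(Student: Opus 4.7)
The plan is to turn the bound on $|\delta|$ into a pointwise bound on $s_i(t) - P_T$, and then read off both conclusions as immediate consequences of the hypothesis $\theta < 1 - P_L/P_T$. Working with the error variable $y = \mathbf{S} - \tilde{P}_T\mathbf{1}$ from \cref{Eq: Simplified Consensus 1} is convenient, since then $s_i(t) = P_T + \delta + y_i(t)$ and $y(t_0) = -\delta\mathbf{1}$.

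First I would establish a uniform $\ell^2$ bound on $y(t)$. By \cref{Lem: Lemma 1} the matrix $-(L+\Delta)$ is symmetric negative definite, so
\begin{equation*}
\frac{d}{dt}\|y(t)\|_2^2 = 2 y^T \dot{y} = -2 y^T (L+\Delta) y \le 0,
\end{equation*}
which gives $\|y(t)\|_2 \le \|y(t_0)\|_2 = |\delta|\sqrt{N}$ for all $t \ge t_0$. Since $|y_i(t)| \le \|y(t)\|_2$ for every component $i$, this yields the pointwise bound $|y_i(t)| \le |\delta|\sqrt{N}$, and consequently
\begin{equation*}
|s_i(t) - P_T| = |y_i(t) + \delta| \le |\delta|(1 + \sqrt{N}) < \theta P_T
\end{equation*}
by the hypothesis on $|\delta|$. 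Rearranging gives the two-sided bound $(1-\theta)P_T \le s_i(t) \le (1+\theta)P_T$ claimed in the lemma.

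For the second conclusion I would use the assumption $\theta < 1 - P_L/P_T$, which rewrites as $P_L < (1-\theta)P_T$. Combined with the lower bound just established, this gives $s_i(t) > P_L$ for every $i$ and every $t > t_0$. Substituting into the strategy (\ref{Eq: Strategy 1}), we get $P_i(t) = (P_L/s_i(t))\,P_{i,\max} < P_{i,\max}$ for $i \neq k$, and similarly $P_k(t) = (P_L/s_k(t))\,\tilde{P}_{k,\max} < \tilde{P}_{k,\max}$, which is the desired saturation-free condition. As a side benefit, the strict positivity of $s_i(t)$ also rules out the division-by-zero pathology flagged in the paragraph preceding the lemma.

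I do not expect any genuine obstacle here; the only delicate point is the step that converts an $\ell^2$-norm bound into a coordinate bound. That step is justified by the trivial inequality $|y_i| \le \|y\|_2$, but it is worth emphasizing because it is what lets the perturbation threshold $\theta P_T / (1+\sqrt{N})$ absorb both the initial offset $\delta$ and the worst-case transient growth of any single $y_i(t)$. The constants $1$ and $\sqrt{N}$ in the denominator correspond precisely to these two contributions.
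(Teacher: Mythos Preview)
Your argument is correct and follows essentially the same route as the paper's proof: both work with the error $y=\mathbf{S}-\tilde P_T\mathbf{1}$, bound $\|y(t)\|_2\le\sqrt{N}\,|\delta|$, pass to the componentwise bound $|y_i|\le\|y\|_2$, and then use $s_i=P_T+\delta+y_i$ together with the triangle inequality to get $|s_i-P_T|<(1+\sqrt{N})|\delta|<\theta P_T$. The only cosmetic difference is that the paper obtains $\|y(t)\|_2\le\|y(t_0)\|_2$ via the spectral bound $\|e^{-(L+\Delta)t}\|_2=e^{\lambda_{\max}t}\le 1$ for the symmetric negative-definite matrix, whereas you get the same monotonicity from the Lyapunov computation $\tfrac{d}{dt}\|y\|_2^2=-2y^T(L+\Delta)y\le 0$; the second conclusion $P_i<P_{i,\max}$ is handled identically in both.
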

The proof of \cref{lem: Lemma2} is given in the Appendix. From \cref{lem: Lemma2}, it may appear that as the number of DGs, $N$, increases, there will be a bigger restriction on $\delta$, since $|\delta|<{\theta P_T}/{(1+\sqrt{N}})$. However, it can be shown that the above inequality is not restrictive, mainly because as $N$ increases, $P_T$ also increases. An analysis of this aspect is given in the Appendix. From \cref{lem: Lemma2}, it may also appear that the constraint on $\delta$ is restrictive as $\theta \to 0$, which happens as $P_L \to P_T$. This restriction is however justified, since $P_L \approx P_T$ practically implies that the grid is already close to maximum capacity. Hence, further perturbation in DGs may prevent it from meeting the load demand. So far, it is proved that strategy 1 is valid provided changes in $\delta$ satisfy the conditions in \cref{lem: Lemma2}. Defining the total instantaneous output power of the microgrid as $P_O(t)$, from (\ref{Eq: Strategy 1}), 
\begin{equation}
P_O(t) =\frac{P_L}{s_k} \tilde{P}_{k,max}+\sum_{i=1, i\neq k}^{N} \frac{P_L}{s_i}P_{i,max} 
\end{equation}
Therefore,
\begin{equation}
P_O(t) =\frac{P_L}{s_k} \delta+\sum_{i=1}^{N} \frac{P_L}{s_i}P_{i,max} 
\end{equation}
Thus, defining the instantaneous error $E(t) = P_O(t)-P_L$, we have,
\begin{equation} \label{Eq: Error between PO and PL}
E(t)=P_O(t)-P_L=\frac{P_L}{s_k} \delta+\sum_{i=1}^{N} \frac{P_L}{s_i}P_{i,max}-P_L 
\end{equation}
At $t=t_0$, $s_i=P_T$ for $i=1,2,\cdots,N$. Thus,
\begin{equation}
E(t_0)=P_L\Bigl[ \frac{\delta}{P_T}+\sum_{i=1}^{N} \frac{P_{i,max}}{P_T}-1 \Bigr]
\end{equation}
Since $\sum_{i=1}^{N}\frac{P_{i,max}}{P_T}=1$, therefore
\begin{equation} \label{eq_rev10}
E(t_0)=P_L\frac{\delta}{P_T}
\end{equation}
Equation (\ref{eq_rev10}) shows that $E(t_0) \neq 0$, and since $E(t)$ is continuous, it implies that a perturbation $\delta$ causes a transient mismatch between the delivered power $P_O(t)$ and the load $P_L$. The error $E(t) \to 0$ at steady-state, as proven in \cref{Lem: Lemma 1}. Therefore, Strategy 1 given in (\ref{Eq: Strategy 1}), causes a temporary mismatch of power following a perturbation. This issue is addressed in \cref{Sec: Dynamic Consensus}.

The second strategy, which is slightly different from the first one, is as follows: 
\begin{equation} \label{Eq: strategy 2 modified}
\mbox{ Strategy \!\!\! 2:} \left\{ \begin{aligned}
&\! P_k=\!\frac{P_L}{\tilde{P}_{T}}\tilde{P}_{k,max}\\
&\!P_i =\!\frac{P_L}{s_i}P_{i,max}\,\,\,\mbox{for} \,\,\; i=1,2,\cdots,N \; i\neq k \!\!\!\!\!
\end{aligned} \right.
\end{equation}
As before, the total instantaneous output power of the microgrid $P_O(t)$ is
\begin{equation} \label{Eq: output power strategy 1}
P_O(t)=\frac{P_L}{P_T+\delta}(\tilde{P}_{k,max})+\sum_{i=1, i\neq k}^{N} \frac{P_L}{s_i}P_{i,max} 
\end{equation}
We again evaluate the error $E(t)=P_O(t)-P_L$ for $t \geq t_0$, yielding
\begin{equation} 
E(t)=\frac{P_L}{P_T+\delta}(P_{k,max}+\delta)+\sum_{i=1, i\neq k}^{N} \frac{P_L}{s_i}P_{i,max} -P_L
\end{equation}   
Upon simplifying, we obtain
\begin{equation*} \label{Eq: Error between $P_O$ and $P_L$}
E(t)=P_L\Bigl[\frac{(P_{k,max}+\delta)}{P_T+\delta}+\sum_{i=1, i\neq k}^{N} \frac{P_{i,max}}{s_i}-1\Bigr]
\end{equation*}
Since at $t=t_0$, $s_i=P_T$ for $i=1,2,\cdots,N$,
and $\sum_{i=1, i \neq k}^{N} P_{i,max}=P_T-P_{k,max}$, 
\begin{equation} \label{Eq: Final E}
E(t_0)=P_L\frac{\delta (P_T-P_{k,max})}{P_T(P_T+\delta)}
\end{equation}
Equation (\ref{Eq: Final E}) shows that $E(t_0) \neq 0$, and since $E(t)$ is continuous, it implies that similar to Strategy 1, a perturbation $\delta$ causes a transient mismatch between the delivered power $P_O(t)$ and the load $P_L$ in Strategy 2. The error $E(t) \to 0$ at steady-state, as proven in \cref{Lem: Lemma 1}.

The last candidate strategy is proposed as 
\begin{equation} \label{Eq: strategy 3} 
\mbox{ Strategy \!\!\! 3:} \left\{ \begin{aligned}
&\!P_k=\!\frac{P_L}{s_k}({P_{k,max}}+s_k-P_T) \\
&\!P_i=\!\frac{P_L}{s_i}P_{i,max} \,\,\, \mbox{for} \,\,\; i=1,2,\cdots,N \; i\neq k \!\!\!\!\!
\end{aligned} \right.
\end{equation}
The Strategy 3 allows DGs to update their output power more smoothly compared to the first two strategies.
In this case, 
\begin{equation}
P_O(t)=\frac{P_L}{s_k}(P_{k,max}+s_k-P_T)+\sum_{i=1, i\neq k}^{N} \frac{P_L}{s_i}P_{i,max} 
\end{equation}
Therefore,
\begin{equation} \label{Eq: Error of the third strategy}
E(t)=P_L\Bigl[-\frac{P_T}{s_k}+\sum_{i=1}^{N} \frac{P_{i,max}}{s_i}\Bigl]
\end{equation}
Since $s_i=P_T$ for all $i=1, 2,\cdots, N$, at $t=t_0$, $E(t_0)=0$. However, $E(t)$ still undergoes transient fluctuations. Based on (\ref{Eq: Error of the third strategy}),
\begin{equation} \label{Eq: Edot simplified}
\frac{\dot{E}(t)}{P_L}= \frac{P_T \dot{s}_k(t)}{s_k^2(t)}-\sum_{i=1}^{N} \frac{P_{i,max} \dot{s}_i(t)}{s_i^2(t)}
\end{equation}
Equation (\ref{Eq: Edot simplified}) can be further simplified using (\ref{Eq: Matrix Format of the Agents Communication Dynamics}) as follows,
\begin{equation} \label{Eq: Expanded Derivative of E}
\begin{aligned}
\frac{\dot{E}(t)}{P_L}= &\frac{P_T \dot{s}_k}{s_k^2}- \\
&\left[\begin{array} {ccccccc} \frac{P_{1,max}}{s_1^2}&\frac{P_{2,max}}{s_2^2}&\cdots&\frac{P_{N,max}}{s_N^2}\end{array} \right] \\ &\hspace{15mm}\times[-(L+\Delta)S+hd_k(P_T+\delta)]
\end{aligned}
\end{equation}
Since at $t=t_0$, $s_i=P_T$ for all $i=1,2,\cdots,N$, $\mathbf{S}(t_0)=P_T\mathbf{1}$ and (\ref{Eq: Expanded Derivative of E})
becomes 
\begin{equation}\label{Eq: Derivative of E}
\begin{aligned}
\frac{\dot{E}(t_0)}{P_L}=&\frac{P_T \dot{s}_k(t_0)}{P_T^2} - \\ 
&\left[\begin{array} {ccccccc} \!\! \frac{P_{1,max}}{P_T^2} & \!\!\! \frac{P_{2,max}}{P_T^2} & \!\!\! \cdots & \!\!\! \frac{P_{N,max}}{P_T^2}\end{array} \right] \\
&\hspace{8mm}\times\left[-L\mathbf{1}P_T-hd_kP_T+hd_kP_T+hd_k\delta \right]
\end{aligned}
\end{equation}
Simplifying (\ref{Eq: Derivative of E}) yields
\begin{equation} \label{Eq: final Edot(t0)}
\frac{\dot{E}(t_0)}{P_L}=\frac{P_T \dot{s}_k(t_0)}{P_T^2}-\frac{hP_{k,max} \delta}{P_T^2}
\end{equation} 
Referring to (\ref{Eq: Matrix Format of the Agents Communication Dynamics}), at $t=t_0$, the term $\dot{s}_k(t_0)$ in (\ref{Eq: final Edot(t0)}) is 
\begin{equation} \label{Eq: skdot at t0}
\dot{s}_k(t_0)=h(P_T+\delta)-hP_T=h\delta
\end{equation}
Therefore, from (\ref{Eq: final Edot(t0)}) and (\ref{Eq: skdot at t0}),
\begin{equation}
\frac{\dot{E}(t_0)}{P_L}=h\delta \frac{P_T-P_{k,max}}{P_T^2} 
\end{equation}
where from (\ref{Eq: Agents Communication Dynamics}), $h$ is a positive scalar chosen by $k^{th}$ agent. Thus, although $E(t_0)=0$, $\dot{E}(t_0) \neq 0$.
Therefore, as $E(t)$ is continuous, similar to Strategies 1 and 2, a change $\delta$ results in a transient mismatch between $P_O$ and $P_L$. It is shown that the three strategies proposed above match the load power $P_L$ at steady-state while producing transient deviations. This transient issue is resolved in the next section, where a strategy is proposed to practically maintain $P_O=P_L$ at any time. \vspace{-2mm}

\subsection{Proportional Power Sharing with Transient Power Match} \label{Sec: Dynamic Consensus}
Upon a perturbation in $P_{k,max}$, which results in a change in $P_T$, the agents estimate $\tilde{P}_T=P_T+\delta$ through (\ref{Eq: Matrix Format of the Agents Communication Dynamics}). Among the DGs, only DG$_k$ has a knowledge of $\tilde{P}_T=P_T+\delta$. The other DGs in the microgrid converge to $\tilde{P}_T$ through consensus only at steady-state. This leads to the transient power mismatch discussed in \cref{sec_rev2}. To remove this transient mismatch, we propose a strategy where DG$_k$ modulates its power delivery as follows, while the other DGs maintain the same strategy as in \cref{sec_rev2}: 
\begin{equation} \label{Eq: Keeping Power Sharing Control} 
\begin{aligned}
&P_k=\frac{P_L}{s_k} P'_{k,max} \\
&P_i=\frac{P_{L}}{s_i}P_{i,max} \qquad \quad \mbox{for} \quad i=1,2,\cdots,N \quad i\neq k
\end{aligned}
\end{equation}
where $P'_{k,max}$ is an auxiliary dynamic variable required to modulate the instantaneous power of DG$_k$. Hence, at $t=t_0$, $P'_{k,max}(t_0)=P_{k,max}(t_0)$, and it is required to converge to $(P_{k,max}+\delta)$ while $s_i$ converges via consensus.  
With the goal of maintaining $P_O(t)=P_L$ for all $t > t_0$, we must have
\begin{equation} \label{Eq: the output power equals load} 
P_L=P_O(t)=\frac{P_L}{s_k} P'_{k,max}+\sum_{i=1,i\neq k}^{N}\frac{P_{L}}{s_i}P_{i,max}
\end{equation}
Thus,
\begin{equation} \label{Eq: Derivative of PO-PL}
\frac{P'_{k,max}}{s_k} +\sum_{i=1,i\neq k}^{N}\frac{P_{i,max}}{s_i}-1=0
\end{equation}
Therefore, $P'_{k,max}$ is 
\begin{equation}  \label{Eq: update of Pkmaxprime}
P'_{k,max}(t) = s_k(t) \Bigl[ 1 - \sum_{i= 1, i\neq k}^{N} \frac{P_{i,max}}{s_i(t)} \Bigr]    
\end{equation}
The algorithm for updating $P'_{k,max}$ and $s_i$ for $i=1, 2, \cdots, N$ in (\ref{Eq: Keeping Power Sharing Control}) is as follows:
\begin{subequations}  \label{eq: proposed algorithm} 
	\begin{align}
	&P'_{k,max}(t)= s_k(t) \Bigl[1 - \sum_{i= 1, i\neq k}^{N} \frac{P_{i,max}}{s_i(t)}\Bigr] \label{Eq: DG_K strategy 1}\\
	&\dot{\textbf{S}}=-(L+\Delta)\mathbf{S}+hd_k \tilde{P}_{T} \quad \,\,\, \mbox{where} \,\,\,\,\,\,\, \mathbf{S}(t_0)=P_T \mathbf{1} \label{Eq: update of Si in realtime}
	\end{align}
\end{subequations}
Based on (\ref{Eq: Keeping Power Sharing Control}) and (\ref{eq: proposed algorithm}), we state and prove the following lemma:
\begin{lemma} \label{Lem: 2}
	The dynamic system of (\ref{eq: proposed algorithm}) is stable, i.e. the terms $P'_{k,max}$, $\frac{P_{i,max}}{s_i}$ and $\textbf{S}$ remain bounded if $|\delta|<{\theta P_T}/{(1+\sqrt{N}})$, where $0 < \theta < 1- (P_L/P_T)$. Furthermore, $P'_{k,max} \rightarrow \tilde{P}_{k,max}$ and $\mathbf{S} \rightarrow \tilde{P}_T \mathbf{1}$, while the instantaneous delivered power satisfies (\ref{Eq: the output power equals load}) for all $t \geq t_0$.
\end{lemma}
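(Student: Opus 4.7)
The plan is to decompose the proof into two independent parts, exploiting the observation that the dynamics (\ref{Eq: update of Si in realtime}) for $\mathbf{S}$ are identical to (\ref{Eq: Matrix Format of the Agents Communication Dynamics}) and in particular do not depend on the auxiliary variable $P'_{k,max}$. First I would invoke \cref{lem_rev1} directly to conclude that $\mathbf{S}(t) \to \tilde{P}_T \mathbf{1}$ exponentially fast, which handles the convergence of $\mathbf{S}$. Next I would apply \cref{lem: Lemma2}, whose hypothesis $|\delta| < \theta P_T / (1+\sqrt{N})$ matches the one assumed here, to obtain the uniform two-sided bound $(1-\theta) P_T \leq s_i(t) \leq (1+\theta) P_T$ for every $i$ and every $t \geq t_0$. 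Since $\theta < 1$, this keeps each $s_i$ uniformly bounded away from zero, so each ratio $P_{i,max}/s_i(t)$ is bounded above by $P_{i,max}/((1-\theta) P_T)$.

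I would then turn to $P'_{k,max}$. Because it is defined algebraically by (\ref{Eq: DG_K strategy 1}) as a continuous function of the components of $\mathbf{S}$, its boundedness is inherited from the previous step: combining the bounds on $s_k$ with the bound on the finite sum $\sum_{i \neq k} P_{i,max}/s_i$ gives a finite, time-independent bound on $|P'_{k,max}(t)|$. For the asymptotic limit, I would pass $s_i(t) \to \tilde{P}_T$ through the continuous defining expression, yielding $P'_{k,max} \to \tilde{P}_T\bigl[1 - (P_T - P_{k,max})/\tilde{P}_T\bigr] = P_{k,max} + \delta = \tilde{P}_{k,max}$, as required.

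The remaining claim, that $P_O(t) = P_L$ for every $t \geq t_0$, will be immediate by construction: the formula (\ref{Eq: update of Pkmaxprime}) was derived by solving the power-balance identity (\ref{Eq: the output power equals load}) for $P'_{k,max}$, so substituting it back into (\ref{Eq: Keeping Power Sharing Control}) produces $P_O(t) \equiv P_L$ as an algebraic identity, independent of the transient state of $\mathbf{S}$. No gain condition or time average is needed; instantaneous matching holds pointwise in $t$.

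The hard part, if any, will be verifying that the $\mathbf{S}$ dynamics in (\ref{Eq: update of Si in realtime}) are genuinely decoupled from the algebraic update of $P'_{k,max}$, so that the two earlier lemmas apply verbatim with the same initial condition $\mathbf{S}(t_0) = P_T \mathbf{1}$. Once this observation is made, the remaining arguments are straightforward continuity and bookkeeping, and no new stability analysis is required.
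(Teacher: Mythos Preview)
Your proposal is correct and mirrors the paper's own proof almost exactly: the paper likewise observes that (\ref{Eq: update of Si in realtime}) is identical to (\ref{Eq: Matrix Format of the Agents Communication Dynamics}), invokes \cref{lem_rev1} for boundedness and convergence of $\mathbf{S}$, applies \cref{lem: Lemma2} to bound each $P_{i,max}/s_i$, verifies $P_O(t)=P_L$ by direct substitution of (\ref{Eq: DG_K strategy 1}) into (\ref{Eq: Keeping Power Sharing Control}), and passes to the limit to obtain $P'_{k,max}\to\tilde{P}_{k,max}$. The only cosmetic difference is the order in which the power-balance identity and the limit of $P'_{k,max}$ are checked.
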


\begin{proof}
	Since (\ref{Eq: update of Si in realtime}) is equivalent to (\ref{Eq: Matrix Format of the Agents Communication Dynamics}), per \cref{Lem: Lemma 1}, the dynamic system of (\ref{Eq: update of Si in realtime}) is ISS. Therefore $\textbf{S}$ is bounded. Additionally, as (\ref{Eq: update of Si in realtime}) and \ref{Eq: Matrix Format of the Agents Communication Dynamics}) have the same initial conditions, i.e. $\mathbf{S}(t_0)=P_T \mathbf{1}$, thus $\mathbf{S} \rightarrow \tilde{P}_T \mathbf{1}$. 
	Following $|\delta|<{\theta P_T}/{(1+\sqrt{N}})$, from \cref{lem: Lemma2} we have $(1-\theta) P_T \leq s_i(t) \leq (1+\theta) P_T$ with $0 < \theta < 1- (P_L/P_T)$. Thus,
	\begin{equation}
	\frac{P_{i,max}}{(1+\theta)P_T} \leq \frac{P_{i,max}}{s_i} \leq \frac{P_{i,max}}{(1-\theta)P_T}   
	\end{equation}
	Therefore, $\frac{P_{i,max}}{s_i}$ is bounded for all $i=1, 2, \cdots, N$. It demonstrates that (\ref{Eq: DG_K strategy 1}) represents a viable way to update $P'_{k,max}$. By plugging $P'_{k,max}$ from (\ref{Eq: DG_K strategy 1}) into (\ref{Eq: Keeping Power Sharing Control}), $P_O(t)$ simplifies to
	\begin{equation}
	P_O(t)=\frac{P_L}{s_k} s_k \Bigl[ 1 \!-\!\!\!\! \sum_{i= 1, i\neq k}^{N} \!\!\!\! \frac{P_{i,max}}{s_i} \Bigr] \!+ \!\!\!\!\! \!\sum_{i=1,i\neq  k}^{N}\!\!\frac{P_{L}}{s_i}P_{i,max} = P_L
	\end{equation}
	for all $t\geq t_0$. Since $\mathbf{S}$ converges to $\tilde{P}_T \mathbf{1}$, from (\ref{Eq: DG_K strategy 1}) we therefore deduce
	\begin{equation}
	P'_{k,max}(t) \rightarrow \tilde{P}_T \Bigl[ 1 - \sum_{i= 1, i\neq k}^{N} \frac{P_{i,max}}{\tilde{P}_T} \Bigr] = \tilde{P}_{k,max} 
	\end{equation}
	This completes the proof. 
\end{proof}
The controller designed in (\ref{Eq: DG_K strategy 1}) and (\ref{Eq: update of Si in realtime}) maintains $P_O(t)=P_L$ following a variation in the power capacity of a DG, namely $P_{k,max}$. However, to compute the term 
\begin{equation} \label{Eq:  Accumulation of Terms except pk/sk}
\sum_{i=1,i\neq k}^{N}\frac{P_{i,max}}{s_i}
\end{equation}
in $P'_{k,max}$, as given in (\ref{Eq: DG_K strategy 1}), the $k^{th}$ agent requires additional information. The following approach is proposed to enable the $k^{th}$ agent to attain this information distributively. This approach is based on the distributed finite-time average consensus studied in \cite{charalambous2015distributed}. According to \cite{charalambous2015distributed}, each agent $i$, shares $\frac{P_{i,max}}{s_i(t)}$ to its outgoing neighbors $\mathcal{N}_{\;\;i}^{+}$ where, following \cref{Sec: Preliminary Definitions}, $\mathcal{N}_{\;\;i}^{+}$ stands for the set of nodes which receives signals from node $i$. Accordingly, based on what follows the agents are able to distributively compute the instantaneous average of all $\frac{P_{i,max}}{s_i(t)}$ where $i=1, 2, \cdots, N$, i.e.
\begin{equation} \label{Eq: Average}
C_a(t)=\frac{\sum_{i=1}^{N}\frac{P_{i,max}}{s_i(t)}}{N}
\end{equation} 
Then, the $k^{th}$ agent can compute (\ref{Eq:  Accumulation of Terms except pk/sk}) via 
\begin{equation} \label{Eq: Computing the accumulation term}
\sum_{i=1,i\neq k}^{N}\frac{P_{i,max}}{s_i}=N \, C_a(t)-\frac{P_{k,max}}{s_k}
\end{equation}
One example of applying this distributed finite-time average consensus is presented in \cite{aalipour2017distributed}. Similar to \cite{aalipour2017distributed}, the steps of executing the finite-time algorithm is as following:
\begin{equation} 
\begin{aligned}  
\overline{g}_i(m+1)&=p_{ii}\overline{g}_i(m)+\sum_{j\in\mathcal{N}_{\;\;i}^{-}}p_{ij}\overline{g}_j(m)\\
g_i(m+1)&=p_{ii}g_i(m)+\sum_{j\in\mathcal{N}_{\;\;i}^{-}}p_{ij}g_j(m)
\end{aligned}\label{finite}
\end{equation}
where $\overline{g}_i(0)=\frac{P_{i,max}}{s_i}$ and $g_i(0)=1$ for $i=1, 2, \cdots, N$. Additionally, $p_{ij}=\frac{1}{1+|\mathcal{N}_{\;\;j}^{+}|}$ for $i\in\mathcal{N}_{\;\;j}^{+}\cup\{j\}$, otherwise is zero. Let us define the vectors 
\small
\begin{equation} \label{Eq: Vectors of successive differences}
\begin{aligned} 
\hspace{-11mm}&\overline{g}_{i,2m}^T\!\!=\!\![\overline{g}_i(1)\!-\!\overline{g}_i(0), \overline{g}_i(2)\!-\!\overline{g}_i(1),\cdots, \overline{g}_i(2m+1)\!-\!\overline{g}_i(2m)] \hspace{-10mm}\\
\hspace{-11mm}&g_{i,2m}^T\!\!=\!\![g_i(1)\!-\!g_i(0), g_i(2)\!-\!g_i(1), \cdots, g_i(2m+1)\!-\!g_i(2m)]  \hspace{-7mm}
\end{aligned} 
\end{equation}
\normalsize
and the following Hankel matrices
\begin{equation} \label{Eq: Hankel 1} 
\small{
	\Gamma\{\overline{g}_{i,2m}^T\}\triangleq
	\begin{bmatrix}\overline{g}_{i,2m}(1) & \cdots & \overline{g}_{i,2m}(m+1)\\
	\overline{g}_{i,2m}(2) & \cdots &\overline{g}_{i,2m}(m+2)\\
	\vdots & \ddots &\vdots\\
	\overline{g}_{i,2m}(m+1) &\cdots & \overline{g}_{i,2m}(2m+1)
	\end{bmatrix}}\normalsize
\end{equation}
and
\begin{equation} \label{Eq: Hankel 2} 
\small{
	\Gamma\{g_{i,2m}^T\}\triangleq
	\begin{bmatrix}g_{i,2m}(1) & \cdots & g_{i,2m}(m+1)\\
	g_{i,2m}(2) & \cdots & g_{i,2m}(m+2)\\
	\vdots & \ddots &\vdots\\
	g_{i,2m}(m+1) &\cdots & g_{i,2m}(2m+1)
	\end{bmatrix}}\normalsize
\end{equation}
Each agent $i$ runs the steps in (\ref{finite}) for $2N+1$ times and keeps the values $\overline{g}_i(m)$ and $g_i(m)$ for $m=1, 2, \cdots, 2N+1$. Having $\overline{g}_i(m)$ stored for the $2N+1$, each agent $i$ establishes the vectors $\overline{g}_{i,2m}^T$ and $g_{i,2m}^T$ defined in (\ref{Eq: Vectors of successive differences}) starting from $m=0$. At the same time, all individual agents construct their Hankel matrices $\Gamma\{\overline{g}_{i,2m}^T\}$ and $\Gamma\{g_{i,2m}^T\}$ defined in (\ref{Eq: Hankel 1}) and (\ref{Eq: Hankel 2}), respectively. Additionally, they calculate the ranks of the Hankel matrices for each $m$ and repeat the same procedure for the next $m+1$ until for a specific $m$ either $\Gamma\{\overline{g}_{i,2m}^T\}$ or $\Gamma\{g_{i,2m}^T\}$ becomes a defective matrix. Assume $\Gamma\{\overline{g}_{i,2M_i}^T\}$ or $\Gamma\{g_{i,2M_i}^T\}$ is the first matrix which loses its full rank where $\beta_i=[\beta_{i,0},\cdots,\beta_{i,M_i-1},1]^T$ is its corresponding kernel. Having the kernel $\beta_i$, the $i^{th}$ agent computes the average of all $\overline{g}_i(0)=\frac{P_{i,max}}{s_i}$ for $i=1, 2, \cdots, N$ defined as $C_a$ in (\ref{Eq: Average}) through the following
\begin{equation} \label{Eq: final average finite step}
C_a(t) = \frac{1}{N} \sum_{i=1}^{N}\overline{g_i}(0)=\frac{\left[\overline{g}_{i}(0),\overline{g}_{i}(1),\cdots,\overline{g}_{i}(M_i)\right]\beta_i}{[g_i(0),g_i(1),\cdots,g_i(M_i)]\beta_i}
\end{equation}
Thereby, the $k^{th}$ agent achieves $C_a(t)$, distributively. At this step, the $k^{th}$ agent obtains the term in (\ref{Eq:  Accumulation of Terms except pk/sk}) via (\ref{Eq: Computing the accumulation term}).
By plugging (\ref{Eq:  Accumulation of Terms except pk/sk}) back to (\ref{Eq: update of Pkmaxprime}) the $k^{th}$ agent is able to compute $P'_{k,max}$. 

To implement the proposed strategy practically, the mentioned procedures are required to be discretized firstly, since in practice the signals and algorithms update, 
digitally. We define an index $w$, starting from $w=0$, that represents the discrete instants at which the overall consensus algorithm is executed. At $w=0$, the $i^{th}$ agent, $i=1,2,\cdots,N$, has the value of $s_i(w)=P_{T}$. Therefore, they can compute $P_{i,max}/s_i(w)$, individually. Following the finite-time algorithm, they implement the procedure in (\ref{finite})-(\ref{Eq: final average finite step}). 
Now that all the agents, including the $k^{th}$ agent, obtains $C_a(w)$ from (\ref{Eq: final average finite step}), then using $C_a(w)$, the $k^{th}$ agent computes
\begin{equation}  
P'_{k,max}(w) = s_k(w) \Bigl[ 1 - \sum_{i= 1, i\neq k}^{N} \frac{P_{i,max}}{s_i(w)} \Bigr]    
\end{equation}
The command signals to DGs are as follows
\begin{equation}  
\begin{aligned}
&P_k(w)=\frac{P_L(w)}{s_k(w)} P'_{k,max}(w) \\
&P_i(w)=\frac{P_{L}(w)}{s_i(w)}P_{i,max} \qquad \quad \mbox{for} \quad i=1,2,\cdots,N \quad i\neq k
\end{aligned}
\end{equation}
Thereafter, the agents compute $s_i(w+1)$ through
\begin{equation} \label{Eq: discretized update of consenus}
\textbf{S}(w+1) = \textbf{S}(w)+dt[-(L+\Delta)\mathbf{S}(w)+hd_k \tilde{P}_{T}]    
\end{equation}
Set $w=w+1$, and repeat the above strategy with the updated value of the $s_i(w)$ defined in (\ref{Eq: discretized update of consenus}).
\begin{figure}[!htp]
	\centering
	\subfloat[]
	{
   \includegraphics[width=.9 \textwidth]{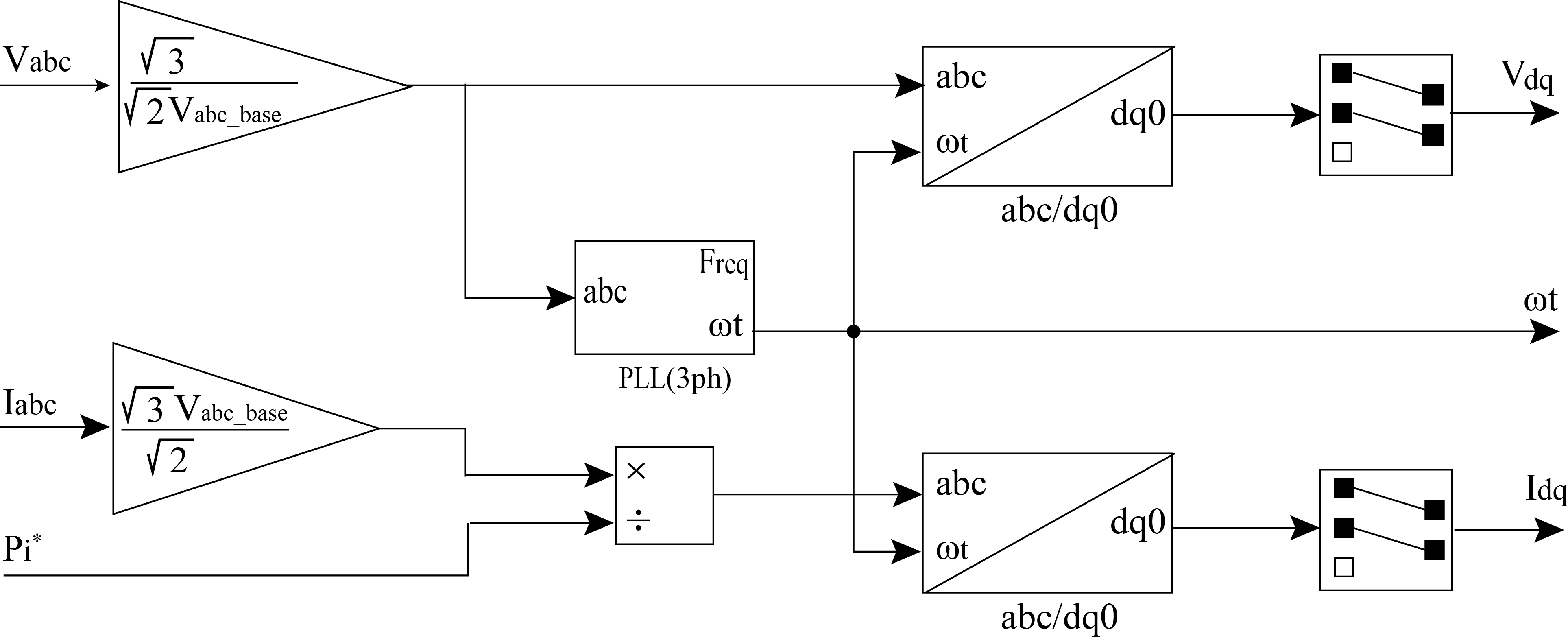}
		\label{fig: Power control and abc to dq0}
	}
	\hfill
	\subfloat[]
	{ 
   \includegraphics[width=.9 \textwidth]{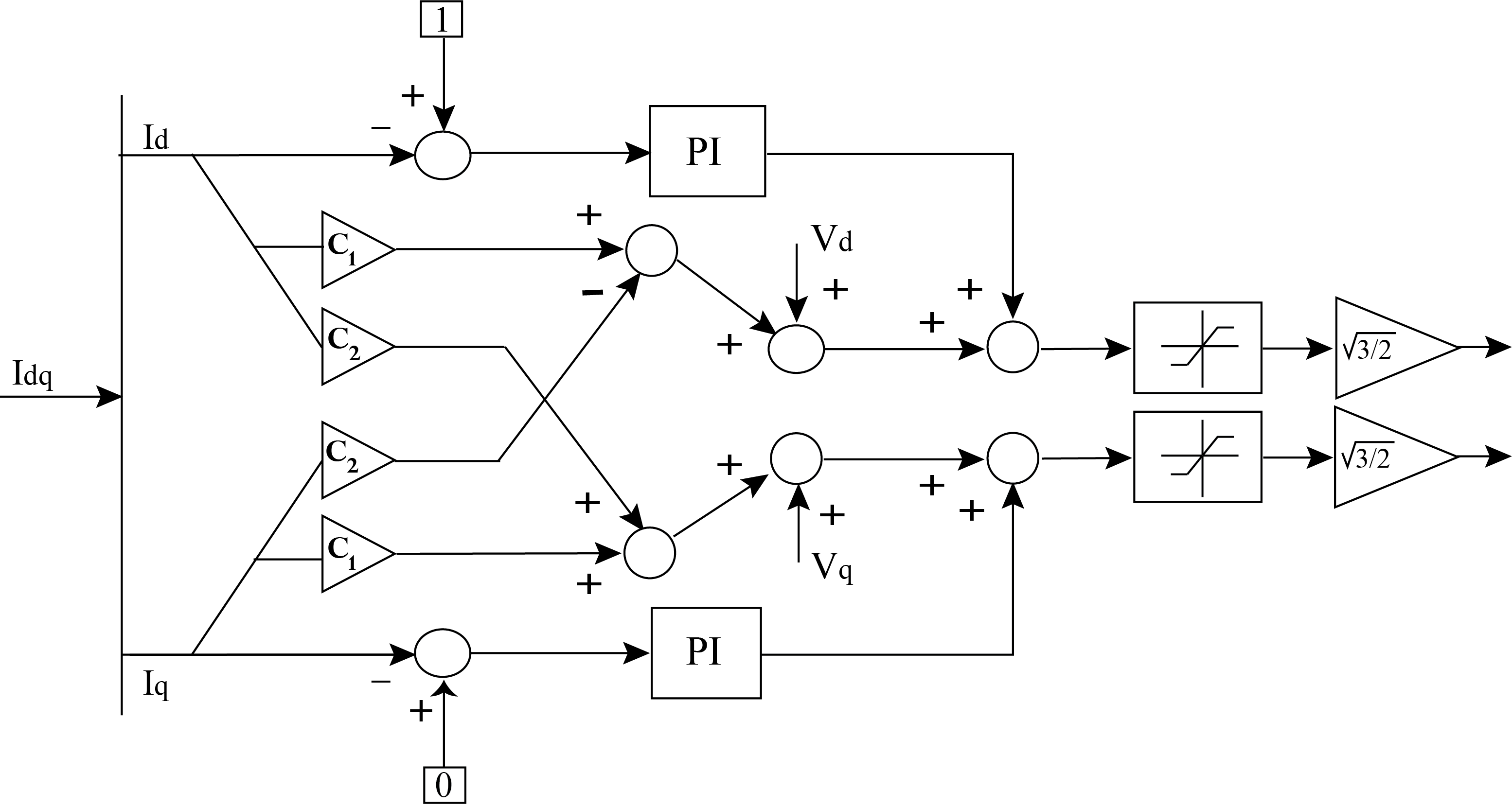}
		\label{fig: Current Control}
	}
	\hfill
	\subfloat[]
	{
   \includegraphics[width=.9 \textwidth]{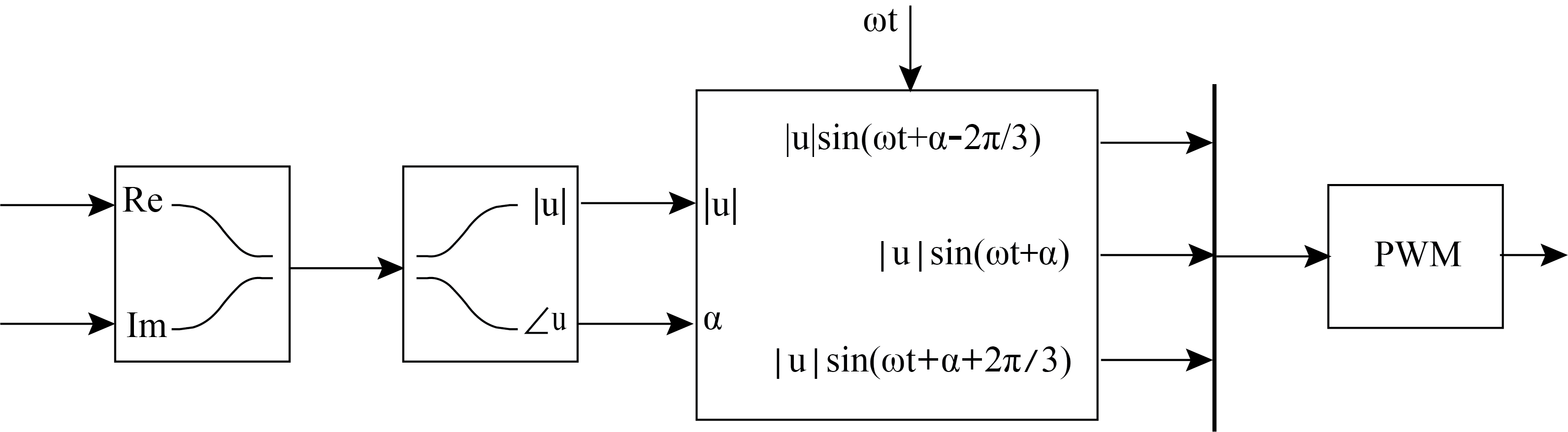}
		\label{fig: Voltage Control}
	} 
	\caption{Physical layer control scheme}
	\label{fig: Whole Power Control Diagram}
\end{figure}
\begin{figure}[t!]
	\begin{center} 
\includegraphics[width=.9\textwidth]{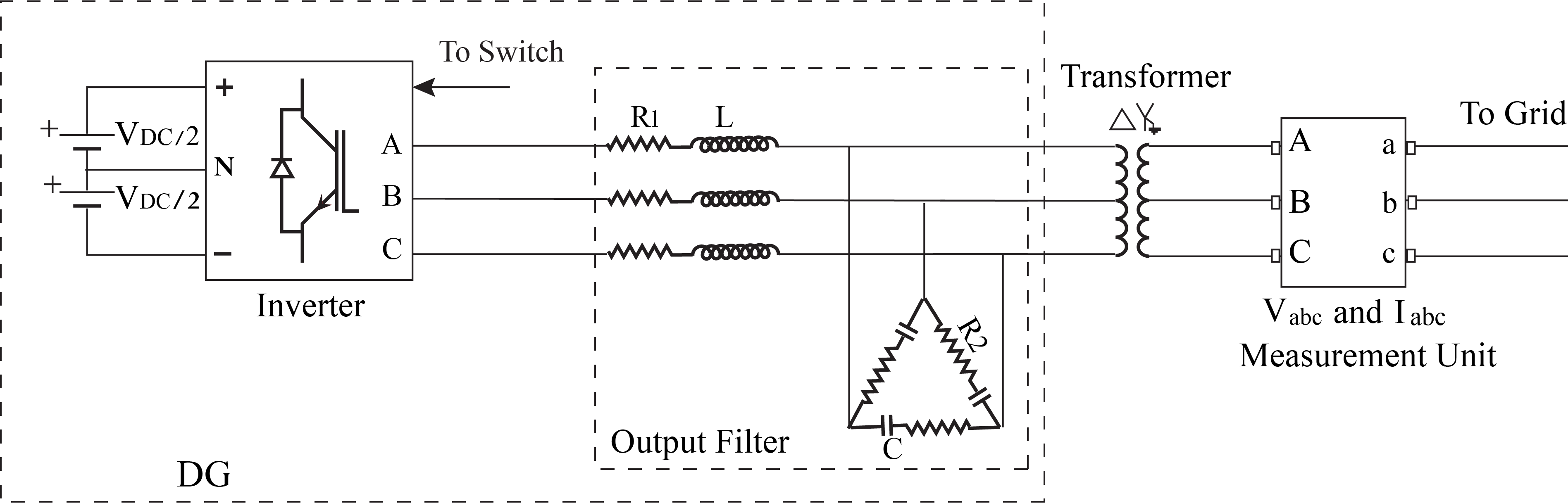}
	\end{center}
	\caption{Power circuit diagram of a DG}
	\label{fig: DG Power Circuit Diagram} \vspace{-4mm}
\end{figure}
\begin{figure}[t!]
	\centering
	\includegraphics[width=.9 \textwidth]{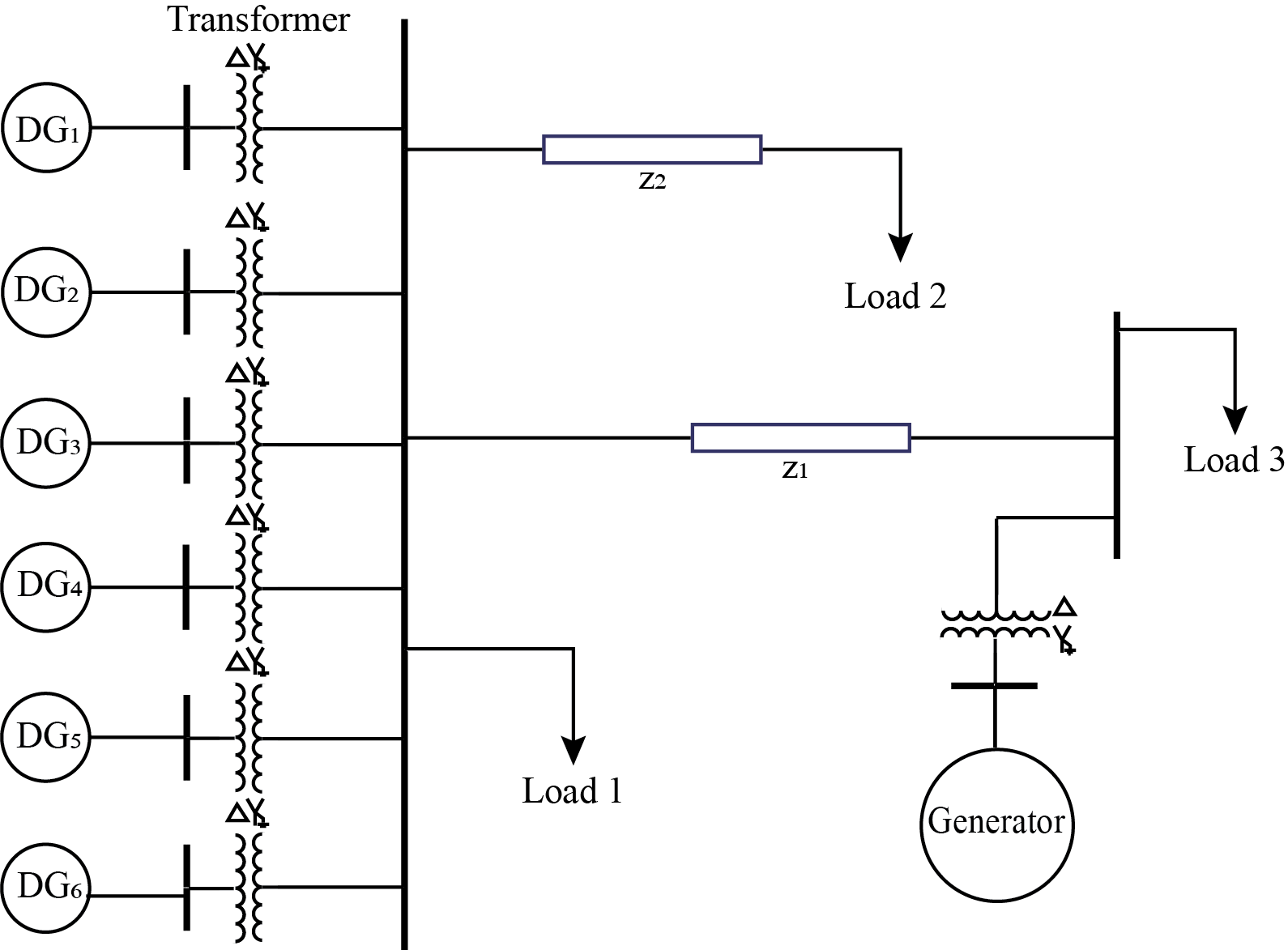}
	\caption{Simulated microgrid bus system}
	\label{fig: Bus system} 
\end{figure} 
We end this section with the following two observations:

\begin{remark}
The consensus algorithms proposed are independent from the load demand $P_L$ and its changes, referring to \ref{Sec: Static Consensus} and \ref{Sec: Dynamic Consensus}.
\end{remark}  

\begin{remark}
All agents in the cyber layer have access to the instantaneous value of $P_L$. The power generation command for strategies 1, 2 and 3 introduced in \ref{sec_rev2} and the control scheme in \ref{Sec: Dynamic Consensus} incorporate $P_L$ and $s_i(t)$ in the commanded power. Therefore, during consensus the power demand is met, irrespective of whether $P_L$ changes or not.
\end{remark}

\section{Controller Layout of Physical Layer} \label{Sec: section 4}
The proposed power control methods for DGs introduced in this study are required to be implemented on both cyber and physical layer of microgrids. The physical layer which includes DGs is where controllers are designed to control the output power of DGs. In this study, the problem of proportional power sharing is addressed in the grid-connected mode, hence the frequency and voltage of DGs are imposed by the main grid. Therefore, frequency and voltage control methods, such as droop control, are not considered in this study. Furthermore, the reactive power control in the grid-connected mode is not studied for the practical reason of availability of reactive power in the main grid. Therefore, the required reactive power of the microgrid can be maintained from the main grid.

The desired active power command of each DG$_i$, i.e $P_i^*$ for $i=1,2,\cdots,N$ is calculated by its corresponding agent, i.e $i^{th}$ agent in the cyber layer. Then, this signal of $P_i^*$ is sent to the power control block of DG$_i$ located in the physical layer. The power control block of DGs is represented in \cref{fig: Power control and abc to dq0}.
This block receives the voltage $V_{abc}$ and current $I_{abc}$ from the voltage and current measurement units installed on the output of each DG, as shown in \cref{fig: DG Power Circuit Diagram}. Figure \ref{fig: DG Power Circuit Diagram} also shows that each DG is connected to the main grid via a dedicated transformer to match the voltage between the DG and the main grid, as the output voltage of the main grid is significantly higher than the output voltage of DGs.
To control the generated power of a DG$_i$, i.e, $P_i$, it is required to control its output current since $V_{abc}$ and the frequency of the microgrid are fixed by the main grid. To achieve this, the desired active power command $P_i^*$ issued from $i^{th}$ agent is also considered as the other input in \cref{fig: Power control and abc to dq0}. Using the Phase-Locked-Loop (PLL) block, the signals in \cref{fig: Power control and abc to dq0} are converted to their equivalent values in the $dq0$ reference frame, i.e. $V_{dq}$ and $I_{dq}$.

Next, the outputs of the $V_{dq}$ and $I_{dq}$ are fed as inputs to \cref{fig: Current Control}. The parameters $C_1$, $C_2$ and the $PI$ controllers coefficients together with the upper and lower bounds of the saturation blocks in \cref{fig: Current Control} are all defined in \cref{Sec: Simulation}.
The outputs of the \cref{fig: Current Control}, regarded as the imaginary and real values of a complex number, are the inputs of the \cref{fig: Voltage Control}. These inputs are converted to the amplitude and phase angle of the same complex value. The amplitude and the phase signals, together with the voltage angle $\omega\,t$, obtained from the PLL in the \cref{fig: Power control and abc to dq0}, constitute the three phase signal fed to the PWM in \cref{fig: Voltage Control}. Finally, each PWM sends the switching signals to the three level inverter of its corresponding DG which is illustrated in \cref{fig: DG Power Circuit Diagram}. \vspace{-3mm}
\begin{figure} [t!]
	\begin{center} 
	\includegraphics[width=0.45\textwidth]{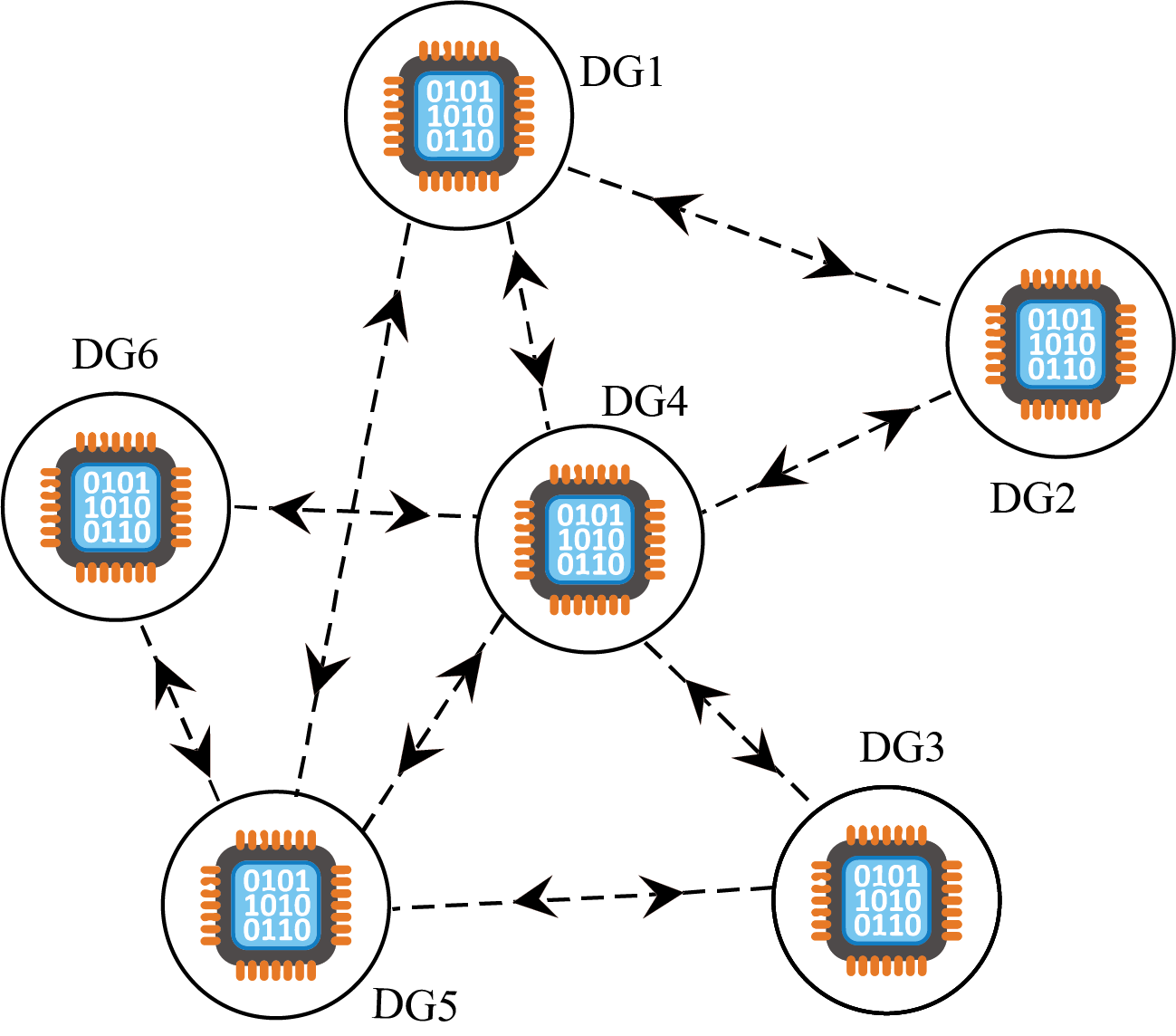}
	\end{center}
	\caption{Communication graph of the simulated $DG$s}
	\label{Fig: Topology of the Simulated microgrid} \vspace{-4mm}
\end{figure}
\begin{figure*}[!htp]  
	\centering
	\subfloat[]
	{
		   \includegraphics[width=.45\textwidth]{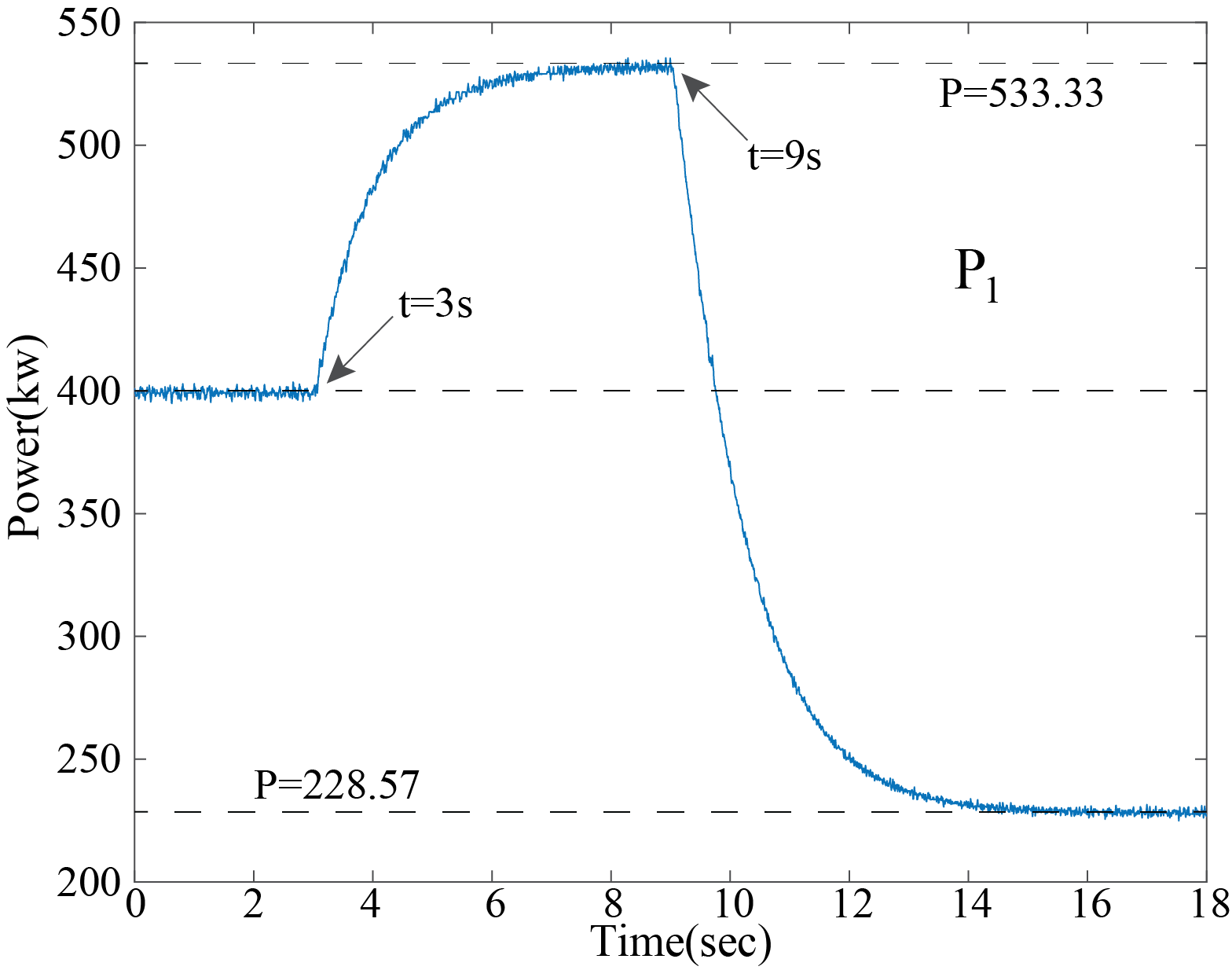}
		\label{fig: DG1 output power}
	}
	\hfill
	\subfloat[]
	{ 
   \includegraphics[width=.45 \textwidth]{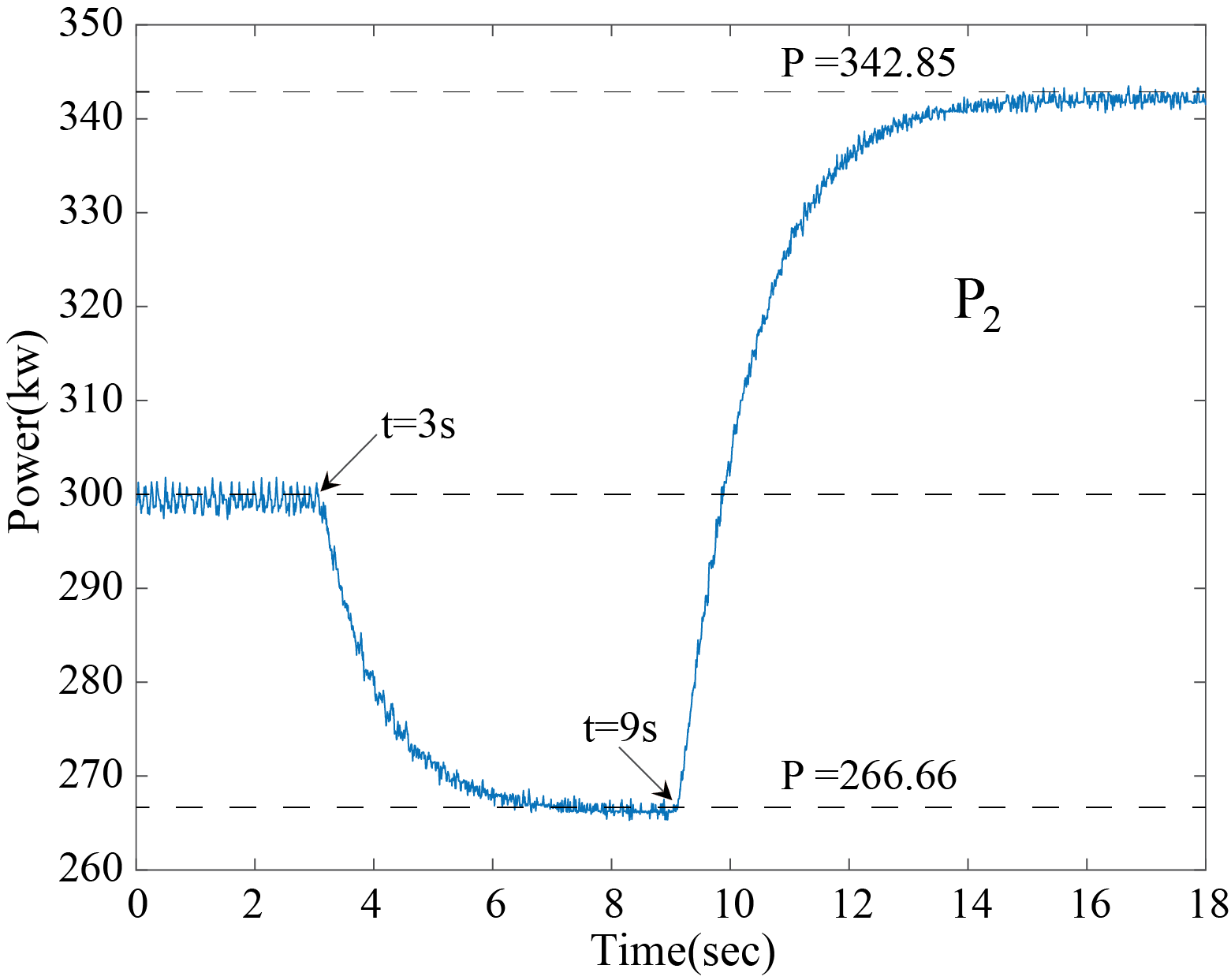}
		\label{fig: DG2 output power}
	}
	\hfill
	\subfloat[]
	{
   \includegraphics[width=.45 \textwidth]{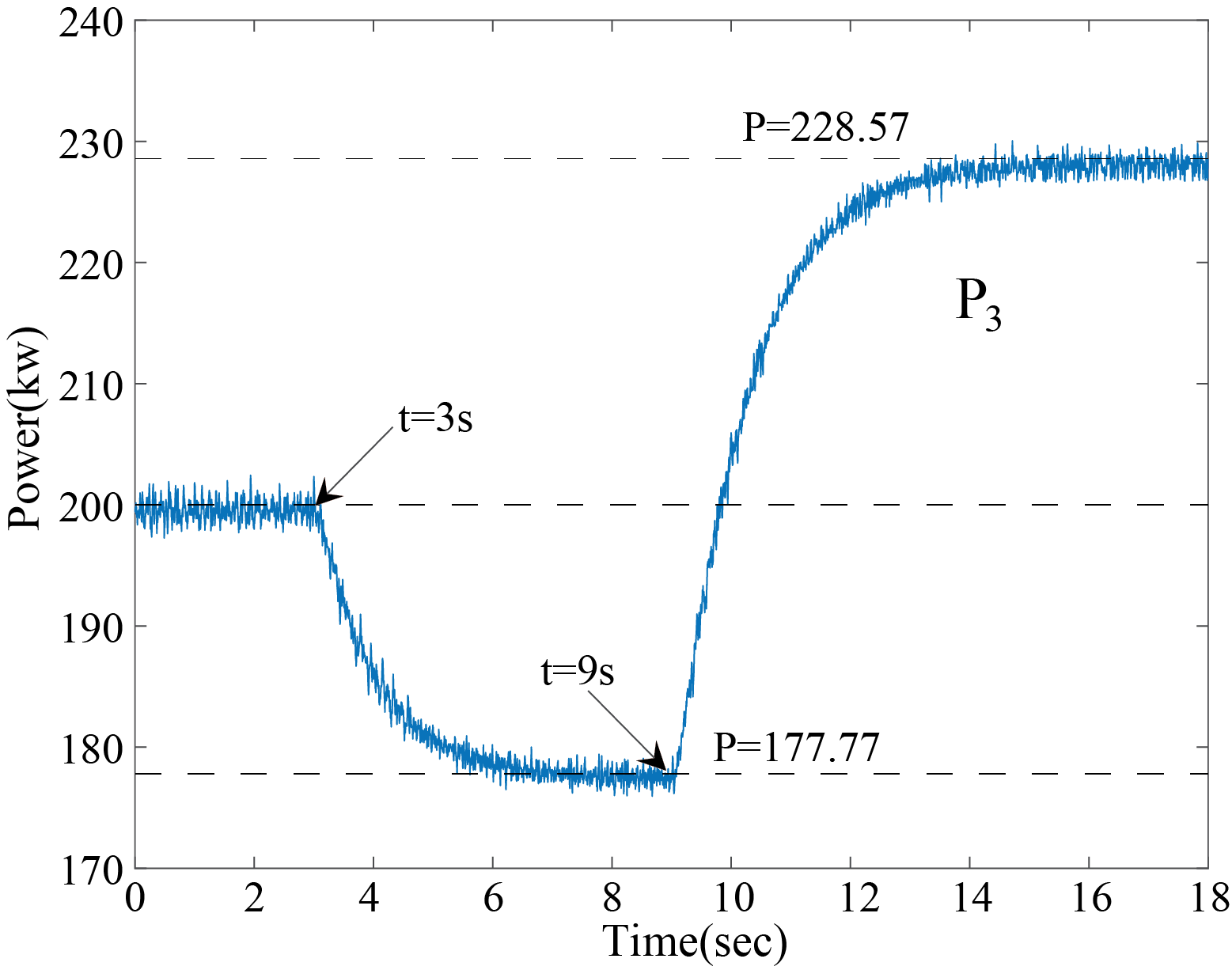}
		\label{fig: DG3 output power}
	} 
	\hfill
	\subfloat[]
	{
   \includegraphics[width=.45 \textwidth]{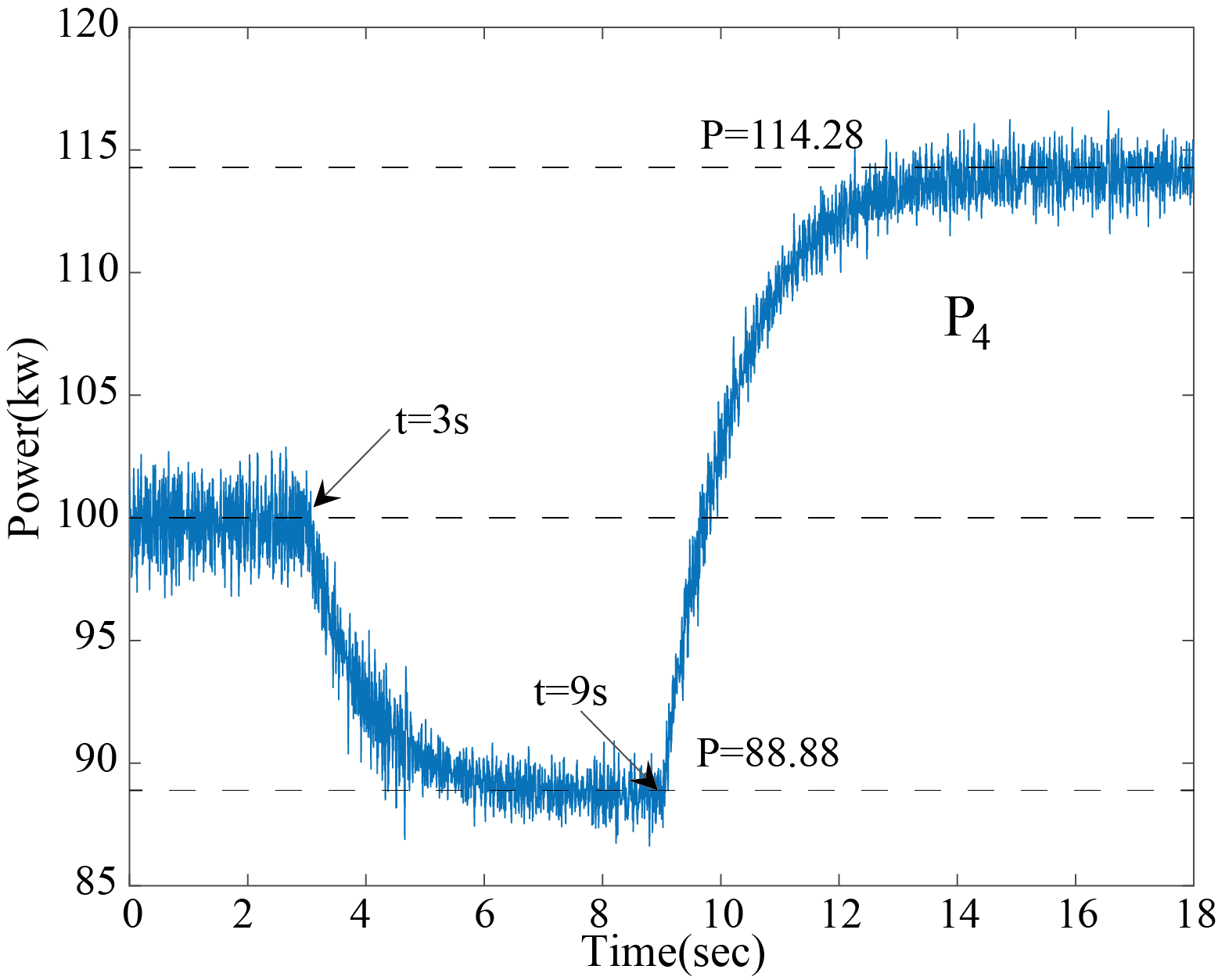}
		\label{fig: DG4 output power}
	} 
	\hfill
	\subfloat[]
	{
\includegraphics[width=.45 \textwidth]{DG4_Cnvgs_e1.png}
		\label{fig: DG5 output power}
	}   
	\hfill
	\subfloat[]
	{
		\includegraphics[width=.45 \textwidth]{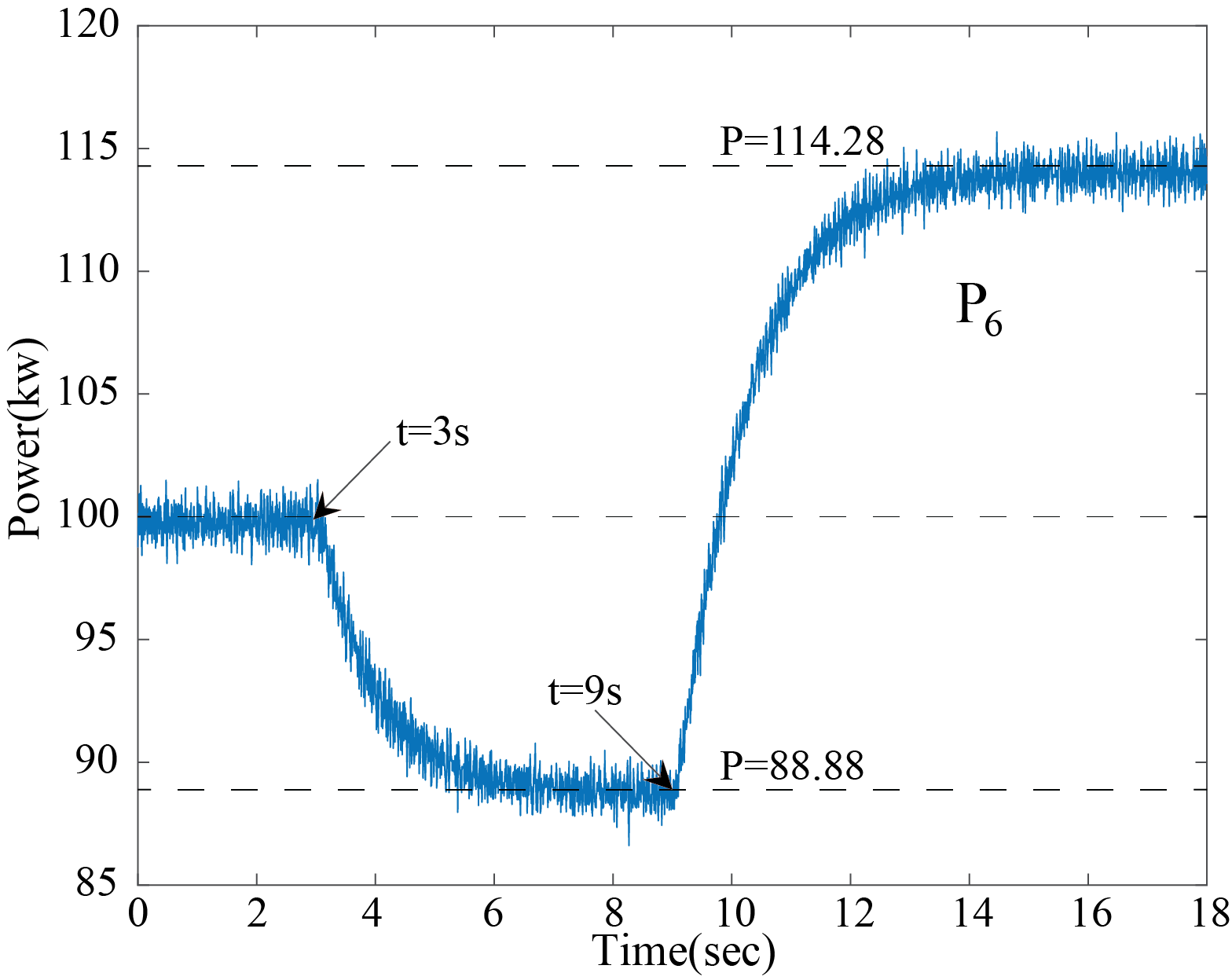}
		\label{fig: DG6 output power}
	}  
	\caption{Output powers $P_i$, $i=1, 2, \cdots, 6$, under a variation in $P_{1,max}$ are    
		depicted in the figures 
		\protect\subref*{fig: DG1 output power}, 
		\protect\subref*{fig: DG2 output power}, 
		\protect\subref*{fig: DG3 output power}, 
		\protect\subref*{fig: DG4 output power}, 
		\protect\subref*{fig: DG5 output power} and 
		\protect\subref*{fig: DG6 output power}, respectively.   
	}
	\label{fig:oPowerDsdisplay}
\end{figure*}
\begin{figure}[htp]
	\centering
   \includegraphics[width=.7 \textwidth]{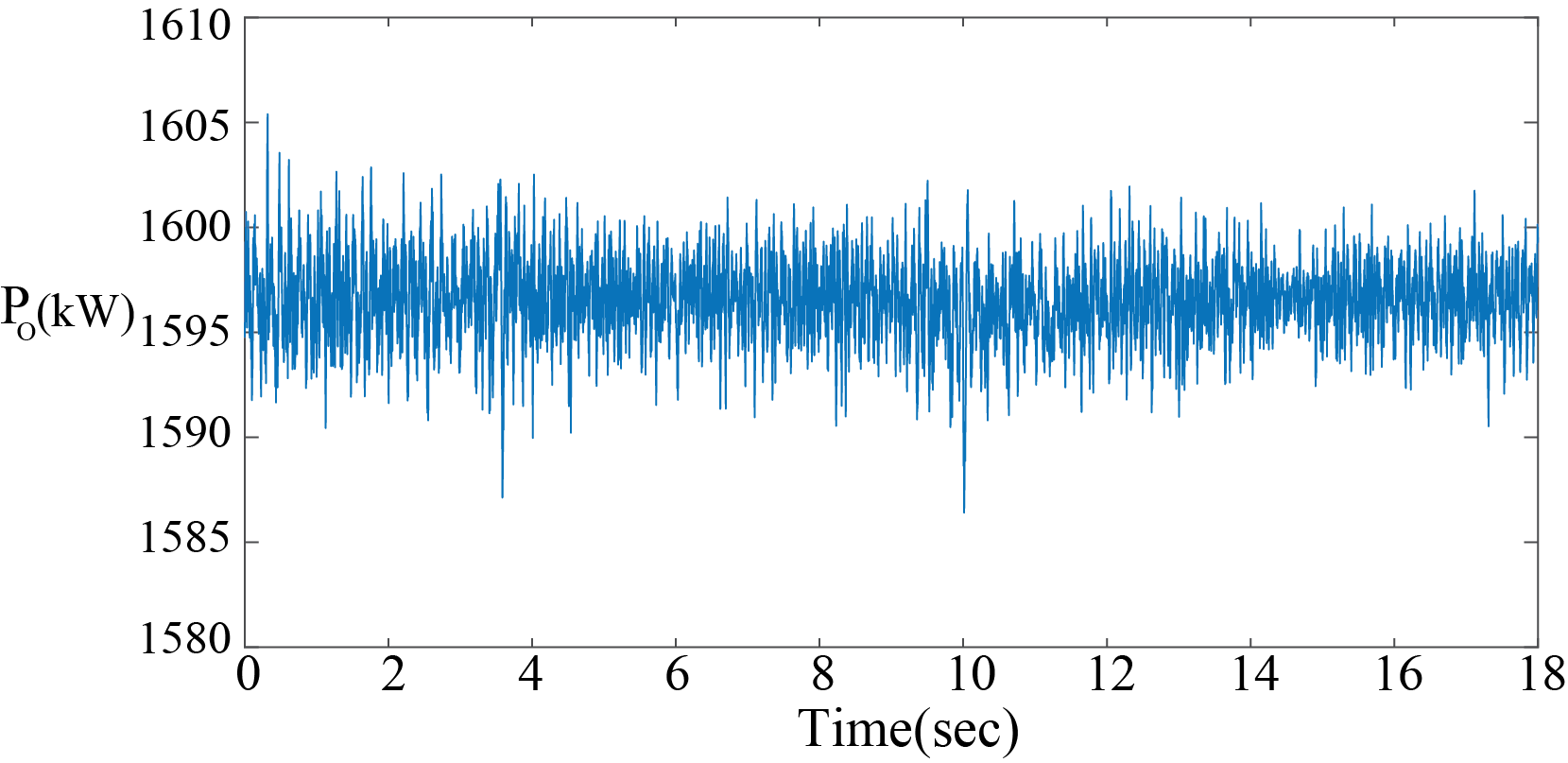}
	\caption{Microgrid total output power $P_O$ obtained from the proposed control method of \cref{Sec: Dynamic Consensus} }
	\label{Fig: Total Output Power} 
\end{figure} 
\begin{figure*}[!htp]
	\centering
	\subfloat[]
	{
   \includegraphics[width=.45\textwidth]{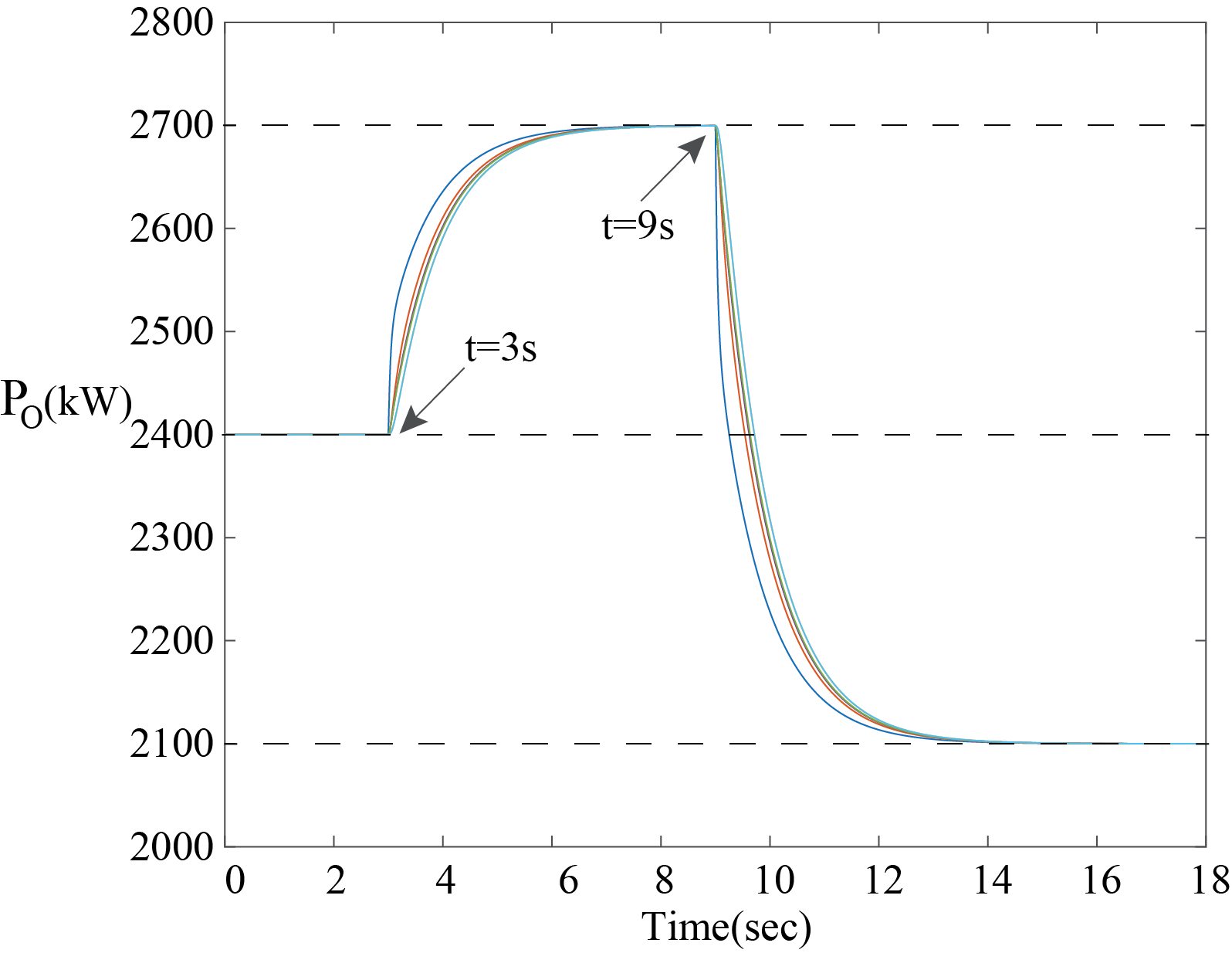}
		\label{fig: Consensus Convergence}	
	}
	\hfill
	\subfloat[]   
	{ 
     \includegraphics[width=.45 \textwidth]{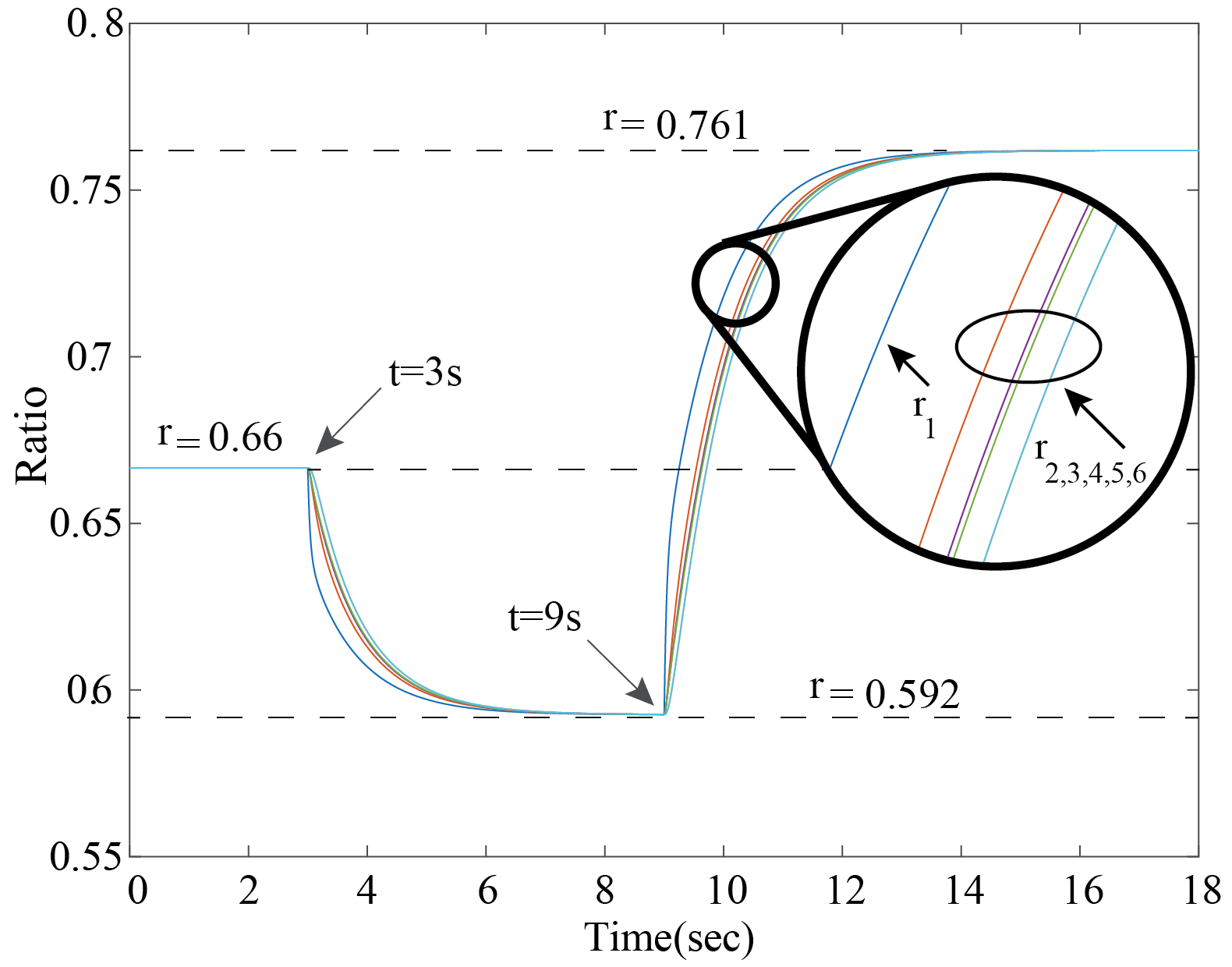}
		\label{fig: Zoomed in Consensus Convergence}
	}
	\label{Consensus}
	\caption{Consensus trajectories of agents on $\tilde{P}_T$ from (\ref{Eq: update of Si in realtime}) and trajectories of $r_i$, $i=1,2,\cdots,6$, from (\ref{Eq: Keeping Power Sharing Control}). 
		(\protect\subref*{fig: Consensus Convergence}) \textnormal{Consensus trajectories on $\tilde{P}_T$ } 
		(\protect\subref*{fig: Zoomed in Consensus Convergence}) \textnormal{The signals $r_i$ } 
	}
\end{figure*}
\begin{figure}[t]
	\centering
	\includegraphics[width=.45 \textwidth]{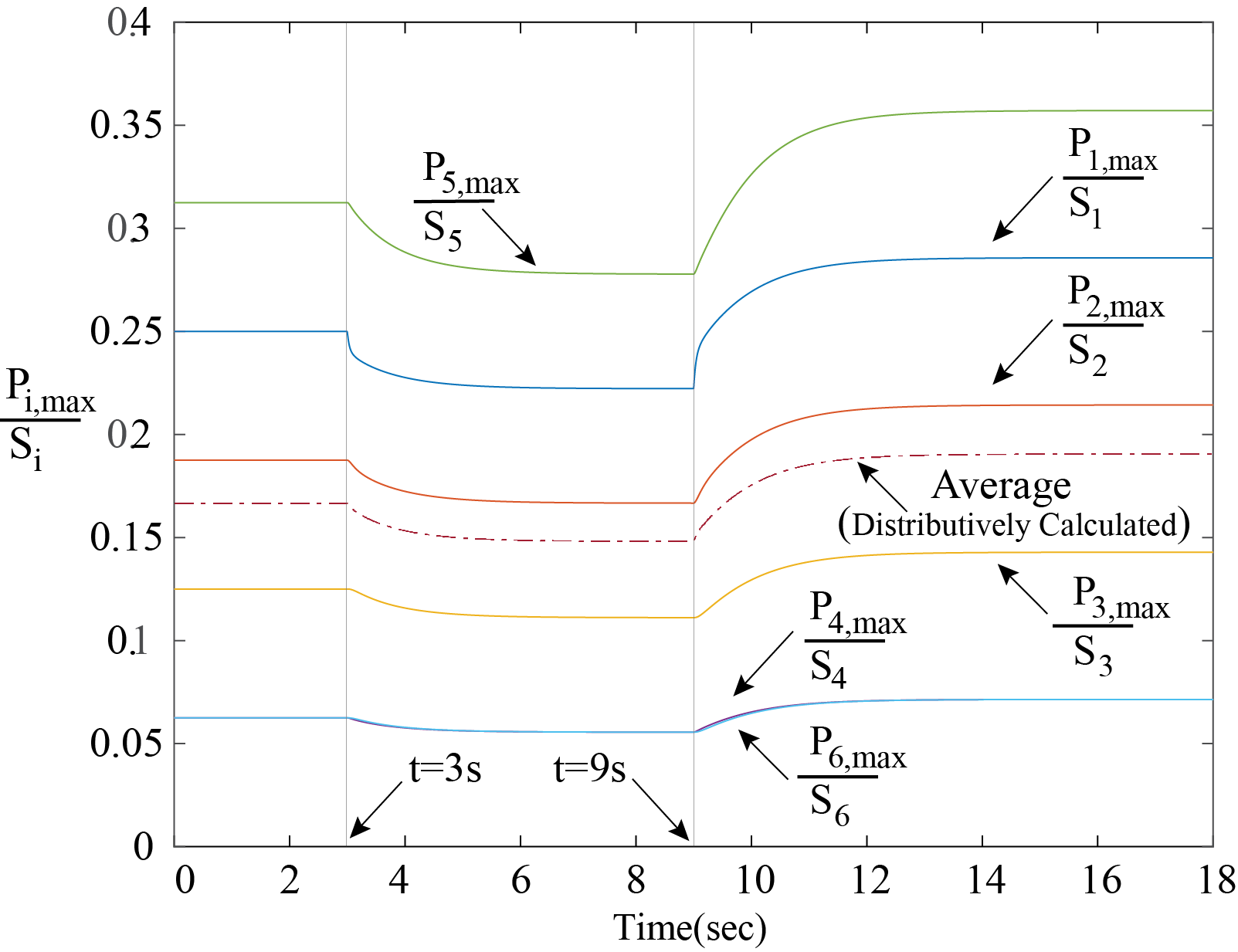}
	\caption{Individual ${P_{i,max}}/{s_i(t)}$  and the average ${P_{i,max}}/{s_i(t)}$ at each time step, as defined in (\ref{Eq:  Accumulation of Terms except pk/sk}) and (\ref{Eq: Average})}
	\label{fig: pisi ratios average} 
\end{figure}
\section{Simulations} \label{Sec: Simulation}
In this section, the performance of the proposed control methods explained in \cref{Sec: Dynamic Consensus} is evaluated through the simulation of a microgrid consisting of six inverter-based DGs shown in \cref{fig: Bus system}. 
The model layout in \cref{fig: Bus system} is inspired from Matlab-based example available in \cite{Matlab_Sims} 
and the studies in \cite{schiffer2015voltage,Nasirian2016Droop}.
Then, the performances of the strategies 1 and 3, provided in (\ref{Eq: strategy 3}) and (\ref{Eq: Strategy 1}) respectively, are juxtaposed with the performance of the controller in \cref{Sec: Dynamic Consensus}. The simulations are accomplished using the Simscape toolbox of Matlab. The simulated DGs are numbered from 1 to 6 and are connected to the main grid in parallel as depicted in \cref{fig: Bus system}. Each DG has a corresponding agent in the cyber layer where the updated value of the desired output power is computed by the agents using the information obtained through their bidirectional communication structure, as shown in \cref{Fig: Topology of the Simulated microgrid}. Note that the communication graph of the DGs in \cref{Fig: Topology of the Simulated microgrid} is connected per its definition in \cref{Sec: Preliminary Definitions}.

As the communication graph in \cref{Fig: Topology of the Simulated microgrid} is a bidirectional graph, per \cref{Sec: Preliminary Definitions}, the adjacency matrix of the graph is symmetric. The adjacency and degree matrices are chosen as

\small
\begin{equation} \label{Eq: Adjacency and Diagonal}
A =\!\!\left[ \begin{array} {ccccccc} \!\!\!\! 0 & \!\!\! 6 & \!\! \! 0 & \!\!\! 6 & \!\!\! 6 & \!\!\! 0 \!\! \!\!\\ \!\!\!\! 6 & \!\!\! 0 & \!\!\! 0 & \!\! \! 6 & \!\!\! 0 & \! \!\! 0 \!\! \!\!\\ \!\!\!\! 0 & \!\!\! 0 & \!\!\! 0 & \!\!\! 6 & \!\!\! 6 & \!\!\! 0 \!\! \!\!\\ \!\!\!\! 6 & \!\!\! 6 & \!\!\! 6 & \!\!\! 0 & \!\!\! 6 & \!\!\! 6 \!\!\!\! \\\!\! \!\! 6 & \!\!\! 0 & \!\! \!6 & \!\!\! 6 & \!\!\! 0 & \!\!\! 6 \!\! \!\!\\ \!\!\!\! 0 & \!\!\! 0 & \!\!\! 0 & \!\!\! 6 & \!\!\! 6 & \!\!\! 0 \!\! \end{array}\right]\!\!\!,\,\, D=\!\! \left[ \begin{array} {ccccccc} \!\!\!\! 18 & \!\!\!\!\! 0 & \!\!\!\!\! 0 & \!\!\!\!\! 0 & \!\!\!\!\! 0 & \!\!\!\!\! 0 \!\!\!\! \\ \!\!\!\! 0 & \!\!\!\!\! 12 & \!\!\!\!\! 0 & \!\!\!\!\! 0 & \!\!\!\!\! 0 & \!\!\!\!\!0 \!\!\!\!\\ \!\!\!\! 0 & \!\!\!\!\! 0 & \!\!\!\!\!12 & \!\!\!\!\! 0 & \!\!\!\!\! 0 & \!\!\!\!\!0 \!\!\!\! \\ \!\!\!\! 0 & \!\!\!\!\! 0 & \!\!\!\!\! 0 & \!\!\!\!\! 30 & \!\!\!\!\! 0 & \!\!\!\!\! 0 \!\!\!\! \\ \!\!\!\! 0 & \!\!\!\!\! 0 & \!\!\!\!\! 0& \!\!\!\!\! 0& \!\!\!\!\! 24 & \!\!\!\!\! 0 \!\!\!\! \\ \!\!\!\! 0 & \!\!\!\!\! 0 & \!\!\!\!\! 0& \!\!\!\!\! 0& \!\!\!\!\! 0 & \!\!\!\!\! 12 \!\!\!\! \end{array}\right] \!\!
\end{equation}
\normalsize
From  \cref{Sec: Preliminary Definitions}, the correlated Laplacian matrix $L=A-D$ is defined as 
\begin{equation} \label{Eq:Laplacian Matrix}
L=\left[ \begin{array} {ccccccc} 18 &-6& 0 &-6& -6& 0 \\ -6& 12& 0 &-6 &0& 0\\ 0& 0& 12& -6& -6& 0  \\-6& -6& -6& 30& -6 &-6 \\-6& 0& -6& -6& 24& -6\\ 0& 0& 0& -6& -6& 12 \end{array}\right]
\end{equation}
For the simulations, we set $h=10$ in (\ref{Eq: Matrix Format of the Agents Communication Dynamics}), (\ref{Eq: update of Si in realtime}) and (\ref{Eq: discretized update of consenus}). Let the maximum capacity of the DGs $P_{i,max}$ for $i=1,2\cdots, 6$ be    
\begin{equation}
\begin{split}
P_{1,max}=600 \,kw  \;\; P_{2,max}=450 \, kw \;\; P_{3,max}=300 \, kw  \\
P_{4,max}=150 \, kw \;\; P_{5,max}=750 \, kw \;\; P_{6,max}=150 \, kw
\end{split}
\end{equation}
Thus, the maximum capacity of the whole microgrid is $P_T=\sum_{i=1}^{6} P_{i,max}=2400 \, kw$, and assume the load demand is $P_{L}=1600 \,kw$.
According to \cref{Sec: Power Defenition}, the agents have knowledge of $P_L$ at all times and $P_T$ at initial time. Therefore, each agent is able to compute the proportional power share ratio $r=\frac{P_L}{P_T}$ defined in (\ref{Eq: Proportional Ratio Root}), independently, which is $\frac{2}{3}$, initially. Therefore, the output power of each DG$_i$ for $i=1,2,\cdots,6$, based on the proportional power sharing, must be,
\begin{equation} \label{eq: initial power dispatch}
\begin{aligned}
P_{1}=400\,kW \quad P_{2}=300\,kW \quad P_{3}=200\,kW  \\
P_{4}=100\,kW \quad P_{5}=500\,kW \quad P_{6}=100\,kW 
\end{aligned}
\end{equation}
During $t=[0,3]sec$, the power capacity of each  DG remains unchanged and hence, each of the DGs generates its active power share as calculated in (\ref{eq: initial power dispatch}). At $t= 3 sec$ the power capacity of DG$_1$ undergoes a step change, thereby $P_T$ has an increment of $300\,kW$, and at $t=9\,sec$
we introduce a decrement of $600 kW$ to the capacity of DG$_1$.

The simulated microgrid consists of several components such as inverters, output filters of inverters, transformers, PWM, PI controllers, line impedance, loads, DC resources, measurement units, PLL and abc/dq0 converters. To emulate the main grid a dispatchable generator is considered in the simulation as shown in \cref{fig: Bus system}. The parameters of the transformer that connects the main grid to the distribution system and those of the transformers which connect the DGs to the distribution system are given in \cref{Tab: Parameters of the microgrid}. The parameters of the distribution system are provided in \cref{Tab: ParametersoftheGrid}. The PI controllers depicted in \cref{fig: Current Control} are identical. The PI controllers of all DGs are also identical, meaning they all have the same $P$ and $I$ gains, chosen as $k_P=0.3$ and $k_I=30$, respectively. 

The energy resource of each DG is simulated as a DC power source, then by utilizing an inverter, the DC current converts to the AC current, as shown in \cref{fig: DG Power Circuit Diagram}. In the same figure, to remove the harmonics from the output power of the inverter, an output filter is applied and then connected to the main grid via a transformer, as illustrated in \cref{fig: DG Power Circuit Diagram}. The output filter consists of RL and  RC branches. The resistive and inductive elements of each RL component are set as R$_1=5.4946\times 10^{-4} \Omega$ and L$=1.4575\times 10^{-4} H$, respectively. The RC components for the output filters of each individual DG$_i$ arranged in the delta format have $P_i (W)$ and reactive power $Q_i(kVar)$ as given in \cref{tab: Active and Reactive powers of filter}. In \cref{fig: Current Control}, $C_1=0.0039$, $C_2 = 0.21$, and the upper and lower limit of the saturation blocks are $+1.5$ and  $-1.5$, respectively. The power control of each DG established in the physical layer is depicted in \cref{fig: Whole Power Control Diagram}.
\begin{table} 
	\centering
	\caption{Parameters of the Transformers}
	\begin{tabular}[t]{c|c|c}
		\multicolumn{3}{c}{Transformer Connected to DGs}\\
		\hline
		\multicolumn{2}{c|} {\makecell {Parameter}} & Value \\
		\hline
		\multicolumn{2}{c|} {\makecell {Nominal Power (kVA)}}& 100 \\
		\hline
		\multicolumn{2}{c|} {\makecell{Nominal Frequency (Hz)}} & 60\\
		\hline
		\multirow{3}{4em}{\makecell{winding 1}}& $V_{1,rms, ph-ph, kV}$ & 25\\
		&$R_{1} (pu)$ & 0.0012\\
		&$L_{1} (pu)$ & 0.03\\
		\hline
		\multirow{3}{4em}{\makecell{winding 2}}& $V_{1,rms, ph-ph, V}$ & 270\\
		&$R_{2} (pu)$ & 0.0012\\
		&$L_{2} (pu)$ & 0.03\\
		\hline
		\multicolumn{2}{c|} {\makecell {Magnetization resistance  Rm (pu)}}  & 200\\
		\hline
		\multicolumn{2}{c|} {\makecell {Magnetization inductance  Lm (pu)}}  & 200 \vspace{5mm}
	\end{tabular}
	\begin{tabular}[t]{c|c|c}
		\multicolumn{3}{c}{Transformer Connected to Dispatchable Generator}\\
		\hline
		\multicolumn{2}{c|} {\makecell {Prameter}} & Value \\
		\hline
		\multicolumn{2}{c|} {\makecell {Nominal Power (kVA)}}& 47000 \\
		\hline
		\multicolumn{2}{c|} {\makecell{Nominal Frequency (Hz)}} & 60\\
		\hline
		\multirow{3}{4em}{\makecell{winding 1}}& $V_{1,rms, ph-ph, kV}$ & 1200\\
		&$R_{1} (pu)$ & 0.0026\\
		&$L_{1} (pu)$ & 0.08\\
		\hline
		\multirow{3}{4em}{\makecell{winding 2}}& $V_{2,rms, ph-ph, kV}$ & 25\\
		&$R_{2} (pu)$ & 0.0026\\
		&$L_{2} (pu)$ & 0.08\\
		\hline
		\multicolumn{2}{c|} {\makecell {Magnetization Resistance  Rm (pu) }} & 500\\
		\hline
		\multicolumn{2}{c|} {\makecell {Magnetization Inductance  Lm (pu)}} & 500\\
	\end{tabular}
	\label{Tab: Parameters of the microgrid}
\end{table}
\begin{table}
	\caption{Parameters of the Grid}
	\centering 
	\begin{tabular}[htbp]{c|c|c} 
		\hline
		\multicolumn{2}{c|} {\makecell{Parameter}} & Value \\
		\hline
		\multicolumn{2}{c|}{\makecell{Load 1 Nominal Voltage ($kV_{ph-ph}$)}}& 25 \\
		\hline
		\multicolumn{2}{c|}{\makecell{Load 1 Active Power P (kW)}} & 250\\
		\hline
		\multicolumn{2}{c|}{\makecell{Load 2 Active Power P (kW)}} & 2000\\
		\hline
		\multicolumn{2}{c|}{\makecell{Load 3 Power S (kVA)}} & 30000+j2000\\
		\hline
		\multirow{4}{6em}{\makecell{ Line 1 $Z_1$ \\ Positive and\\ Zero Sequence}} & Length (km)& 8 \\
		&$R (\Omega/km)$& [0.1153 0.413] \\
		&$L (H/km)$ &[1.05e-3 3.32e-3]\\
		&$C (F/km)$ &[11.33e-009 5.01e-009]\\
		\hline
		\multirow{4}{6em}{\makecell{Line 2 $Z_2$\\ Positive and \\ Zero Sequence}} & Length (km)& 14 \\
		&$R (\Omega/km)$& [0.1153 0.413] \\
		&$L (H/km)$ &[1.05e-3 3.32e-3]\\
		&$C (F/km)$ &[11.33e-009 5.01e-009]\\
		\hline
		\multicolumn{2}{c|}{\makecell{Nominal Frequency (Hz)}} & 60
	\end{tabular}
	\label{Tab: ParametersoftheGrid}
\end{table}

\begin{table} 
	\centering 
	\caption{Active and reactive powers of RC components of each output filter}
	\begin{tabular}[htbp]{c|c|c}
		\hline
		DG Number&Active Power(W)&Reactive Power(kVar)\\
		\hline
		1& 400 &20\\
		\hline
		2&200&10\\
		\hline
		3&600&30\\
		\hline
		4&500&25\\
		\hline
		5&300&15\\
		\hline
		6&400&20\\
	\end{tabular}
	\label{tab: Active and Reactive powers of filter} \vspace{-3mm}
\end{table}
\begin{figure}[htbp]
	\centering
	\subfloat[]
	{
		\includegraphics[width=.45 \textwidth]{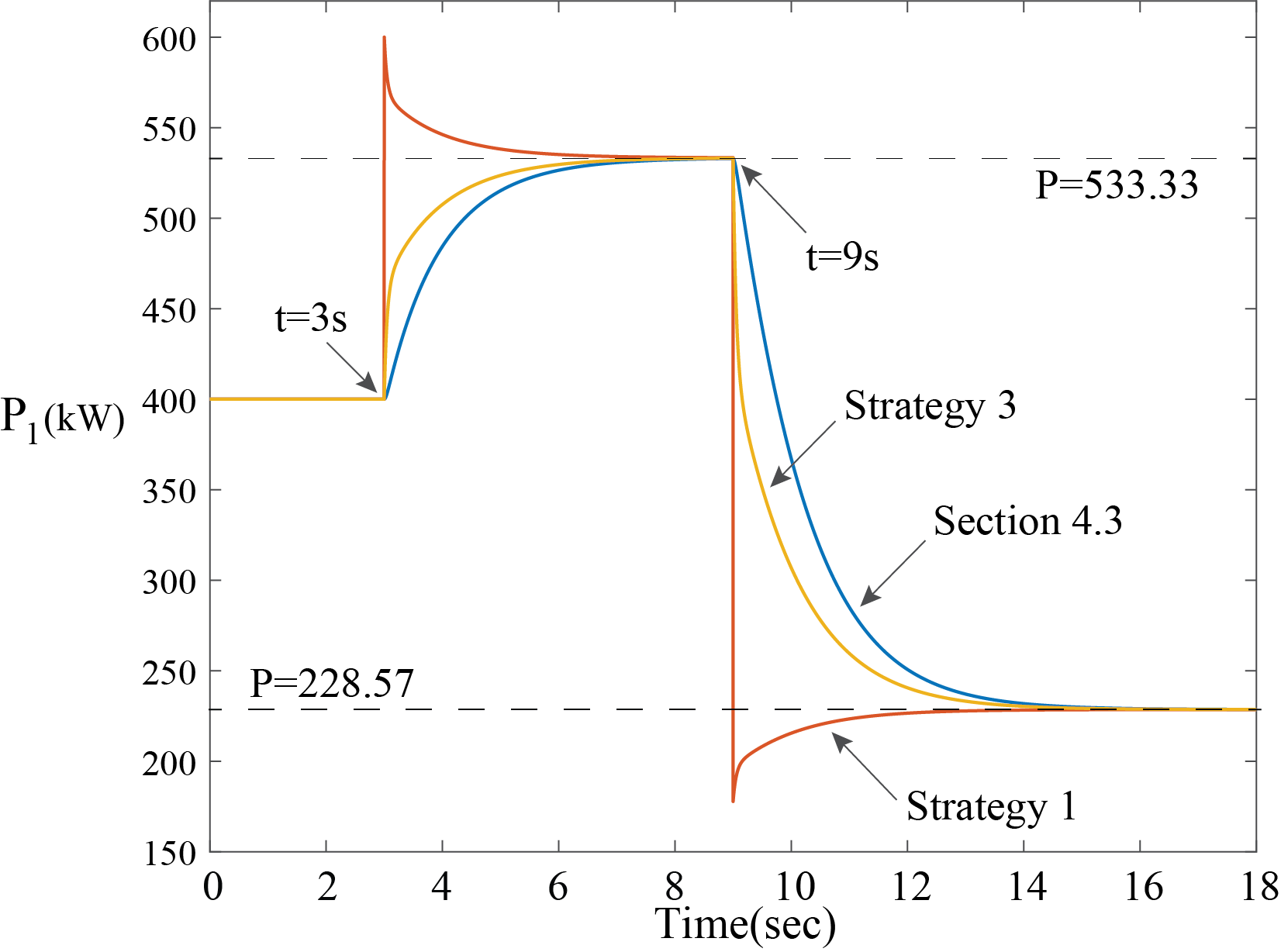} 
		\label{fig: DG1 Different Methods}
	}
	\hfill
	\subfloat[]
	{ 
		\includegraphics[width=.45 \textwidth]{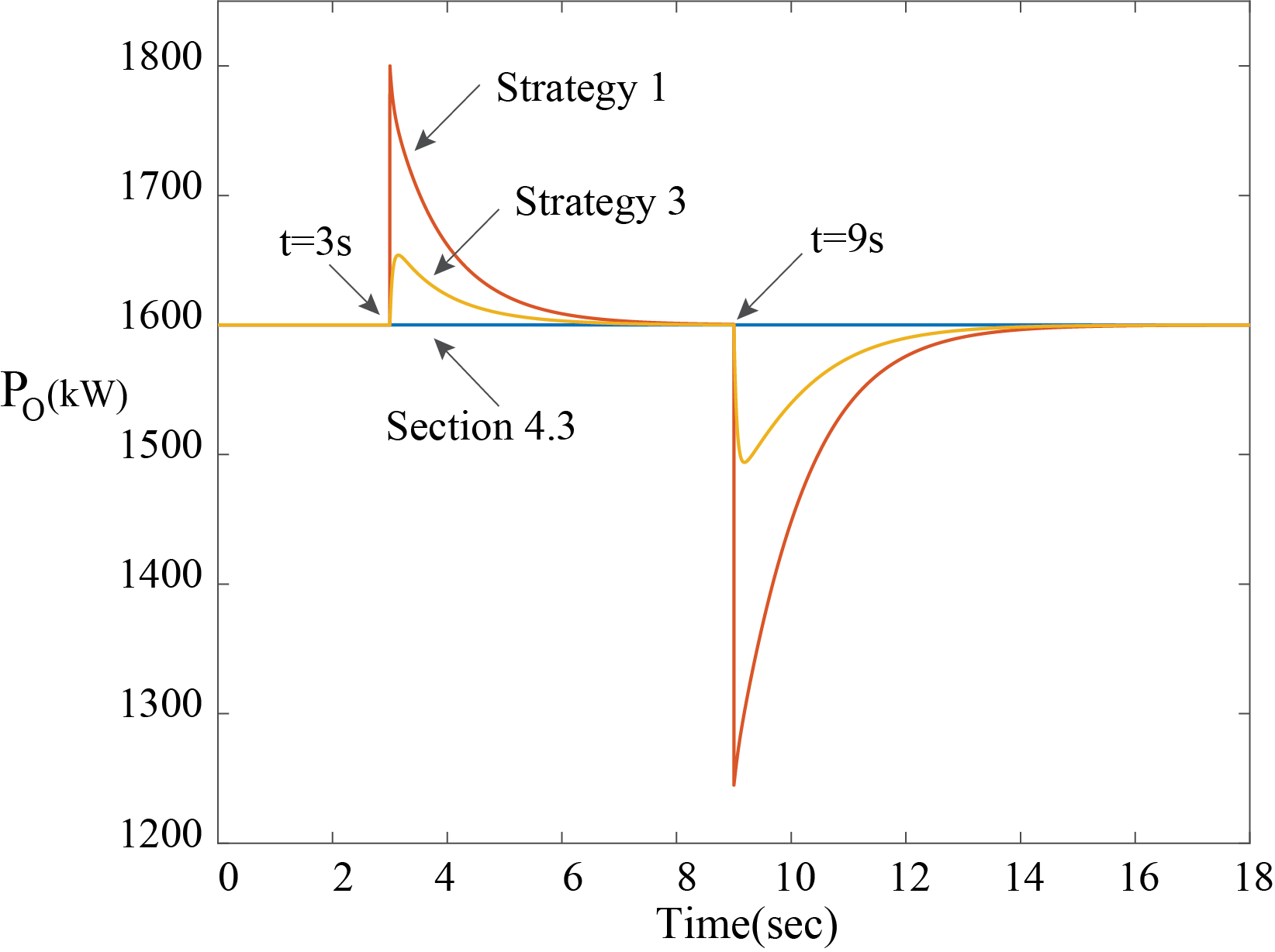}
		\label{fig: Output Sum Diff Method}
	}
	\label{fig: Different Method}    
	\caption{ (\protect\subref*{fig: DG1 Different Methods})  \textnormal{Output power $P_1$ according to Strategies 1, 3 and the proposed controller of \cref{Sec: Dynamic Consensus}} and (\protect\subref*{fig: Output Sum Diff Method}) \textnormal{Their corresponding microgrid total power $P_O$} 
	} \vspace{-6mm}
\end{figure}
Starting from $t = 3\, sec$, $P_{1,max}$ increases from $600\,kW$ to $900\,kW$. Therefore, the microgrid maximum power capacity increases from $2400\,kW$ to $2700\,kW$. 
Based on the approach explained in \cref{Sec: Dynamic Consensus}, the finite-time algorithm in (\ref{finite})-(\ref{Eq: final average finite step})
is embedded in the consensus algorithm (\ref{Eq: update of Si in realtime}) to have DGs apply the proposed control law in (\ref{Eq: Keeping Power Sharing Control}) and (\ref{eq: proposed algorithm}), distributively. The results of the simulations are shown in \cref{fig:oPowerDsdisplay} where $P_1$ increases and $P_i$, $i=2, 3, 4, 5, 6$, decreases. During $t=[3, 9]\,sec$, the microgrid output power $P_O$ remains almost equal to $P_L=1600\,kW$, as shown in \cref{Fig: Total Output Power}. The slight difference between $P_O$ and $P_L$ is due to resistive losses in the DGs due to the resistor elements shown in \cref{fig: DG Power Circuit Diagram}. Considering (\ref{Eq: update of Si in realtime}), the six estimation variables $s_i$ for all $i=1,2,\cdots,6$ are updated through the information exchange until they reach a consensus concerning $\tilde{P}_T=2700 kW$ as demonstrated in \cref{fig: Consensus Convergence}. The ratios of $r_i=\frac{P_{L}}{s_{i,max}}$, for $i = 1, 2, \cdots, 6$, are shown in \cref{fig: Zoomed in Consensus Convergence}, where before reaching a consensus during $t = (3, 8)\,sec$, these ratios are different, however they become almost identical during $t=[8,9]\,sec$.
Figure \ref{fig:oPowerDsdisplay} illustrates that after the consensus algorithm converges, the steady state values of the output powers of DGs are $P_{1}=533.33\,kW$, $P_{2}=266.66\,kW$, $P_{3}= 177.77\,kW$, $P_{4}=88.88\,kW$, $P_{5}=444.44\,kW$ and $P_{6}=88.88\,kW$. Recalling the finite-time average consensus algorithm is embedded in the consensus algorithm, at each time step of evolution of consensus algorithm, the finite-time algorithm is applied. Through this approach, the agents compute the average of $\frac{P_{i,max}}{s_i(t)}$ for $i=1, 2, \cdots, 6$ in a distributed way where its corresponding result is illustrated in \cref{fig: pisi ratios average}.

The next variation of the microgrid maximum power capacity occurs at $t=9s$ where $P_{1,max}$ decreases for $-600 kW$. Therefore, starting from $t=9\,sec$, the current capacity of the microgrid which is $P_T=2700\,kW$ changes to $\tilde{P}_T=2100 kW$. Then, similar to the same procedure adopted in reaction to a change in a microgrid capacity, the control method in (\ref{Eq: Keeping Power Sharing Control}) and (\ref{eq: proposed algorithm}) is triggered. Hence, during $t=[9, 18] sec$, according to \cref{fig: Consensus Convergence}, the agents have reached to another consensus on the maximum power capacity of the microgrid which is $2100 kW$. Figure \ref{fig: DG1 output power} demonstrates that $P_{1}$ becomes $228.57\,kW$ after the convergence during $[9, 18] sec$. Figure \ref{fig:oPowerDsdisplay} also shows that the output power of the other DGs have increased due to the microgrid capacity reduction. The power sharing ratio $r_i$ for $i=1,2,\cdots,6$ are shown in \cref{fig: Zoomed in Consensus Convergence}. The figure illustrates that, during the transient duration of $(9,15)\,sec$, $r_i$ ratios are not equal. On the contrary, they converge to steady state conditions in $[14,18]\,sec$. Furthermore, from \cref{Fig: Total Output Power}, $P_O$ remains practically equal to $P_L=1600\,kW$. In \cref{fig: DG1 Different Methods}, the results of the proposed control algorithm in \cref{Sec: Dynamic Consensus} is compared with the results of the strategies defined as (\ref{Eq: Strategy 1}) and (\ref{Eq: strategy 3}) in \cref{sec_rev2}. From this figure, it is clear that the output power of $P_1$ obtained from the proposed control algorithm \cref{Sec: Dynamic Consensus} differs from the other two ones during the transient duration. However, after the transient durations of $(3,8)\,sec$ and $(9,15)\,sec$ the output power of $P_1$ from all three methods are the same. Figure \ref{fig: Output Sum Diff Method} demonstrates that the approaches of (\ref{Eq: Strategy 1}) and  (\ref{Eq: strategy 3}) are ineffective to address the load demand. They produce significant deviation in $P_O$ from $P_L$ during transient. On the other hand, upon applying the method of \cref{Sec: Dynamic Consensus}, the deviation drastically reduces, both for increase and decrease in maximum power capacity of DG$_1$.  \vspace{-2mm}

\section{Conclusion}
In this research, the problem of distributed proportional power sharing is studied for microgrids that operate in the grid-connected mode. Firstly, a  consensus algorithm is designed through which, under a variation in the maximum power capacity of a DG, all DGs in the microgrid estimate the updated microgrid capacity. Utilizing the estimations, they generate their output powers in a distributed manner. Stability and convergence of the consensus algorithm are proven. While the consensus algorithm operates in the cyber layer, power commands are sent to the DGs at the physical layer using multiple strategies, discussed in the research. In this regards, practical issues such as ensuring power commands are within acceptable bounds during the transient time of the consensus method, are addressed. However, the consensus algorithm along with the aforementioned strategies does not guarantee maintaining load power during the transient time. Therefore, a modified strategy is proposed to guarantee a match between demanded and delivered power during transient, while the DGs reach a new consensus following a perturbation in grid capacity. The distributed controller is tested in a simulated microgrid. The microgrid is modeled in Matlab/Simulink using the Simscape toolbox. A complete description of the model along with parameters values used for simulation, are given. Simulation results confirm the effectiveness of the proposed strategy. \vspace{-4mm}

\section*{Appendix} \label{sec: appen12}

{\bf Proof of \cref{lem: Lemma2}}

\begin{proof}
From (\ref{Eq: Simplified Consensus 1}), we note that $y_i=s_i-\tilde{P}_T$. Since \cref{lem_rev1} shows that $-(L+\Delta)$ is Hurwitz, therefore from (\ref{Eq: Simplified Consensus 1}) we have, 
\begin{equation} \label{Eq: upper limit of y}
y(t)=e^{-(L+\Delta)t}y(t_0) \; \!\! \Rightarrow \!\! \;
\|y(t)\|\!\!\leq \|e^{-(L+\Delta)t}\| \|y(t_0)\| 
\end{equation}
As explained in \cref{Lem: Lemma 1}, $-(L+\Delta)$ is diagonalizable and all of its eigenvalues are negative and real. 
Assuming $\lambda_1 < 0$ is the largest eigenvalue of $-(L+\Delta)$ and since $y(t_0)=-\delta \mathbf{1}$, we have,
\begin{equation} \label{Eq: upper limit of y vector}
\| y(t)\|\leq e^{\lambda_1t} \| y(t_0) \|=e^{\lambda_1t} \sqrt{N} |\delta | \leq \sqrt{N} |\delta |
\end{equation}	Hence, $\|y\|$ is bounded. 
Since $|y_i| \leq \|y(t)\|$, therefore
\begin{equation} \label{Eq: limit of yit}
|y_i(t)|\leq \sqrt{N}|\delta| \quad \forall \; i=1,2,\cdots,N
\end{equation}
If $|\delta|<{\theta P_T}/{(1+\sqrt{N}})$, then it follows that
\begin{equation} \label{Eq: upper limit of yit}
|y_i(t)| \leq \sqrt{N} |\delta| \leq \sqrt{N} {\theta P_T}/{( 1 + \sqrt{N})}
\end{equation}
and since $y_i=s_i-\tilde{P}_T$, therefore we have
\begin{equation} \label{Eq: limit of si}
\begin{aligned}
\tilde{P}_T-\sqrt{N} {\theta P_T}/&{(1+\sqrt{N}}) \leq s_i(t) \leq \\
&\tilde{P}_T + \sqrt{N} {\theta P_T}/{(1+\sqrt{N}})
\end{aligned}
\end{equation}
Since $\tilde{P}_T = P_T + \delta$, and from the assumption $|\delta|<{\theta P}/{(1+\sqrt{N}})$, we have
\begin{equation} \label{Eq: most expanded inequality of si}
\begin{aligned}
P_T - &{\theta P_T}/{(1 + \sqrt{N}}) - \sqrt{N} {\theta P_T}/{(1 + \sqrt{N}}) \leq s_i(t) \\
&\leq  P_T + {\theta P_T}/{(1 + \sqrt{N}})+ \sqrt{N} {\theta P_T}/{(1 + \sqrt{N}})
\end{aligned}  
\end{equation}
Thus,
\begin{equation} \label{Eq: limit of si_a}
(1-\theta) P_T \leq s_i(t) \leq (1+\theta) P_T 	
\end{equation}

Since for all $t > t_0$, the output power of each DG$_i$ should satisfy $P_i(t)= \frac{P_L}{s_i(t)} P_{i,max} < P_{i,max}$, it is required that $s_i(t) > P_L$ for all $t > t_0$. For guaranteeing $s_i(t) > P_L$, from (\ref{Eq: limit of si_a}), we can impose $(1 - \theta) P_T > P_L$. Therefore, under the dynamics of $\mathbf{S}$ in (\ref{Eq: Matrix Format of the Agents Communication Dynamics}), $P_T > P_L/(1 - \theta)$ or $1 - (P_L/P_T)> \theta $ ensures that $P_i(t) < P_{i,max}$. This completes the proof.
\end{proof}

{\bf Observation on $\delta$:} Lemma \ref{lem: Lemma2} gives the condition $|\delta|<{\theta P_T}/{(1+\sqrt{N}})$ to prevent unfavorable transients in $s_i(t)$. To demonstrate that this condition is not restrictive as $N$ increases, we consider a change in $N$ to $N+1$ and a corresponding change from $P_T$ to $P_T+P_{N+1,max}$. Further, we impose 
\begin{equation} \label{eq: append1}
\frac{\theta(P_T+P_{N+1,max})}{1+\sqrt{N+1}} > \frac{\theta P_T}{1+\sqrt{N}}
\end{equation}
to derive the condition under which $|\delta|$ will increase as we increase $N$ to $N+1$. From (\ref{eq: append1}), we have,
\begin{equation} \label{eq: condition on new DG}
P_{N+1,max}>\Bigl[ \frac{\sqrt{N+1}-\sqrt{N}}{1+\sqrt{N}}\Bigr]P_T  
\end{equation} 
From (\ref{eq: condition on new DG}), it can be observed that $P_{N+1,max}$ can be only a small fraction of $P_T$ to allow $|\delta|$ to increase rather than decrease. For instance, if $N=3$, then $P_{N+1,max}>0.098 P_T$, and if $N=8$, then $P_{N+1,max}>0.045 P_T$ which are small fractions of $P_T$. In addition, comparing the right hand side of (\ref{eq: condition on new DG}) with the average of $P_T$, $P_{T,avg}=P_T/N$, we obtain the minimum ratio of $P_{N+1,max}/P_{T,avg}$ as following
\begin{equation} \label{ratio possible minimum1}
N \Big[ \frac{\sqrt{N+1}-\sqrt{N}}{1+\sqrt{N}}\Big]
\end{equation}
Equation (\ref{ratio possible minimum1}) is strictly less than $\frac{1}{2}$ and it converges to $\frac{1}{2}$ for large values of $N$. This proves that $P_{N+1,max}$ is required to be $P_{N+1,max}\geq (1/2)P_{avg}$ at the worst cases to satisfy the condition on $|\delta|$. Therefore, the condition on $|\delta|$ is not restrictive.



\begin{thebibliography}{00}
\bibitem{rocabert2012control}
J.~Rocabert, A.~Luna, F.~Blaabjerg, and P.~Rodriguez, ``Control of power
converters in ac microgrids,'' {\em IEEE transactions on power electronics},
vol.~27, no.~11, pp.~4734--4749, 2012.

\bibitem{he2017simple}
J.~He, Y.~Pan, B.~Liang, and C.~Wang, ``A simple decentralized islanding
microgrid power sharing method without using droop control,'' {\em IEEE
	Transactions on Smart Grid}, vol.~9, no.~6, pp.~6128--6139, 2017.

\bibitem{du2018distributed}
Y.~Du, X.~Lu, J.~Wang, and S.~Lukic, ``Distributed secondary control strategy
for microgrid operation with dynamic boundaries,'' {\em IEEE Transactions on
	Smart Grid}, 2018.

\bibitem{zhang2012energy}
Y.~Zhang, H.~J. Jia, and L.~Guo, ``Energy management strategy of islanded
microgrid based on power flow control,'' in {\em 2012 IEEE PES Innovative
	Smart Grid Technologies (ISGT)}, pp.~1--8, IEEE, 2012.

\bibitem{han2017review}
Y.~Han, H.~Li, P.~Shen, E.~A.~A. Coelho, and J.~M. Guerrero, ``Review of active
and reactive power sharing strategies in hierarchical controlled
microgrids,'' {\em IEEE Transactions on Power Electronics}, vol.~32, no.~3,
pp.~2427--2451, 2017.

\bibitem{barklund2008energy}
E.~Barklund, N.~Pogaku, M.~Prodanovic, C.~Hernandez-Aramburo, and T.~C. Green,
``Energy management in autonomous microgrid using stability-constrained droop
control of inverters,'' {\em IEEE Transactions on Power Electronics},
vol.~23, no.~5, pp.~2346--2352, 2008.

\bibitem{deng2016enhanced}
Y.~Deng, Y.~Tao, G.~Chen, G.~Li, and X.~He, ``Enhanced power flow control for
grid-connected droop-controlled inverters with improved stability,'' {\em
	IEEE Transactions on Industrial Electronics}, vol.~64, no.~7, pp.~5919--5929,
2016.

\bibitem{gui2018improved}
Y.~Gui, C.~Kim, C.~C. Chung, J.~M. Guerrero, Y.~Guan, and J.~C. Vasquez,
``Improved direct power control for grid-connected voltage source
converters,'' {\em IEEE Transactions on Industrial Electronics}, vol.~65,
no.~10, pp.~8041--8051, 2018.

\bibitem{fan2016distributed}
Y.~Fan, G.~Hu, and M.~Egerstedt, ``Distributed reactive power sharing control
for microgrids with event-triggered communication,'' {\em IEEE Transactions
	on Control Systems Technology}, vol.~25, no.~1, pp.~118--128, 2016.

\bibitem{he2014consensus}
D.~He, D.~Shi, and R.~Sharma, ``Consensus-based distributed cooperative control
for microgrid voltage regulation and reactive power sharing,'' in {\em IEEE
	PES Innovative Smart Grid Technologies, Europe}, pp.~1--6, IEEE, 2014.


\bibitem{mahmud2014robust}
M.~Mahmud, M.~Hossain, H.~Pota, and A.~Oo, ``Robust nonlinear distributed
controller design for active and reactive power sharing in islanded
microgrids,'' {\em IEEE Transactions on Energy Conversion}, vol.~29, no.~4,
pp.~893--903, 2014.

\bibitem{guerrero2005output}
J.~M. Guerrero, L.~G. De~Vicuna, J.~Matas, M.~Castilla, and J.~Miret, ``Output
impedance design of parallel-connected ups inverters with wireless
load-sharing control,'' {\em IEEE Transactions on industrial electronics},
vol.~52, no.~4, pp.~1126--1135, 2005.

\bibitem{guerrero2007decentralized}
J.~M. Guerrero, J.~Matas, L.~G. de~Vicuna, M.~Castilla, and J.~Miret,
``Decentralized control for parallel operation of distributed generation
inverters using resistive output impedance,'' {\em IEEE Transactions on
	industrial electronics}, vol.~54, no.~2, pp.~994--1004, 2007.

\bibitem{aalipour2018proportional}
F.~Aalipour and T.~Das, ``Proportional power sharing consensus in distributed
generators,'' in {\em ASME 2018 Dynamic Systems and Control Conference},
pp.~V002T17A002--V002T17A002, American Society of Mechanical Engineers, 2018.

\bibitem{chen2015distributed}
G.~Chen, F.~L. Lewis, E.~N. Feng, and Y.~Song, ``Distributed optimal active
power control of multiple generation systems,'' {\em IEEE Transactions on
	Industrial Electronics}, vol.~62, no.~11, pp.~7079--7090, 2015.

\bibitem{zhong2011robust}
Q.-C. Zhong, ``Robust droop controller for accurate proportional load sharing
among inverters operated in parallel,'' {\em IEEE Transactions on Industrial
	Electronics}, vol.~60, no.~4, pp.~1281--1290, 2011.

\bibitem{pantoja2011population}
A.~Pantoja and N.~Quijano, ``A population dynamics approach for the dispatch of
distributed generators,'' {\em IEEE Transactions on Industrial Electronics},
vol.~58, no.~10, pp.~4559--4567, 2011.

\bibitem{lin1984hierarchical}
C.-E. Lin and G.~Viviani, ``Hierarchical economic dispatch for piecewise
quadratic cost functions,'' {\em IEEE transactions on power apparatus and
	systems}, no.~6, pp.~1170--1175, 1984.

\bibitem{kar2012distributed}
S.~Kar and G.~Hug, ``Distributed robust economic dispatch in power systems: A
consensus+ innovations approach,'' in {\em 2012 IEEE Power and Energy Society
	General Meeting}, pp.~1--8, IEEE, 2012.

\bibitem{lou2016distributed}
G.~Lou, W.~Gu, Y.~Xu, M.~Cheng, and W.~Liu, ``Distributed mpc-based secondary
voltage control scheme for autonomous droop-controlled microgrids,'' {\em
	IEEE transactions on sustainable energy}, vol.~8, no.~2, pp.~792--804, 2016.

\bibitem{vaccaro2011decentralized}
A.~Vaccaro, G.~Velotto, and A.~F. Zobaa, ``A decentralized and cooperative
architecture for optimal voltage regulation in smart grids,'' {\em IEEE
	Transactions on Industrial Electronics}, vol.~58, no.~10, pp.~4593--4602,
2011.

\bibitem{liu2018game}
W.~Liu, W.~Gu, J.~Wang, W.~Yu, and X.~Xi, ``Game theoretic non-cooperative
distributed coordination control for multi-microgrids,'' {\em IEEE
	Transactions on Smart Grid}, vol.~9, no.~6, pp.~6986--6997, 2018.

\bibitem{chaudhuri2004wide}
B.~Chaudhuri, R.~Majumder, and B.~C. Pal, ``Wide-area measurement-based
stabilizing control of power system considering signal transmission delay,''
{\em IEEE Transactions on Power Systems}, vol.~19, no.~4, pp.~1971--1979,
2004.

\bibitem{wu2004evaluation}
H.~Wu, K.~S. Tsakalis, and G.~T. Heydt, ``Evaluation of time delay effects to
wide-area power system stabilizer design,'' {\em IEEE Transactions on Power
	Systems}, vol.~19, no.~4, pp.~1935--1941, 2004.

\bibitem{xiao2004fast}
L.~Xiao and S.~Boyd, ``Fast linear iterations for distributed averaging,'' {\em
	Systems \& Control Letters}, vol.~53, no.~1, pp.~65--78, 2004.

\bibitem{dorfler2018electrical}
F.~D{\"o}rfler, J.~W. Simpson-Porco, and F.~Bullo, ``Electrical networks and
algebraic graph theory: Models, properties, and applications,'' {\em
 Proceedings of the IEEE}, vol.~106, no.~5, pp.~977--1005, 2018.


\bibitem{charalambous2015distributed}
T.~Charalambous, Y.~Yuan, T.~Yang, W.~Pan, C.~N. Hadjicostis, and M.~Johansson,
``Distributed finite-time average consensus in digraphs in the presence of
time delays,'' {\em IEEE Transactions on Control of Network Systems}, vol.~2,
no.~4, pp.~370--381, 2015.

\bibitem{khalil2002nonlinear}
H.~K. Khalil, {\em Nonlinear systems}, vol.~3.
\newblock Prentice hall Upper Saddle River, NJ, 2002.

\bibitem{mesbahi2010graph}
M.~Mesbahi and M.~Egerstedt, {\em Graph theoretic methods in multiagent
	networks}.
\newblock Princeton University Press, 2010.


\bibitem{horn1990matrix}
R.~A. Horn and C.~R. Johnson, {\em Matrix analysis}, vol.~2.
\newblock Cambridge University Press, 2013.

\bibitem{aalipour2017distributed}
F.~Aalipour, A.~Gusrialdi, and Z.~Qu, ``Distributed optimal output feedback
control of heterogeneous multi-agent systems under a directed graph,'' {\em
	IFAC-PapersOnLine}, vol.~50, no.~1, pp.~5097--5102, 2017.

\bibitem{schiffer2015voltage}
J.~Schiffer, T.~Seel, J.~Raisch, and T.~Sezi, ``Voltage stability and reactive
  power sharing in inverter-based microgrids with consensus-based distributed
  voltage control,'' {\em IEEE Transactions on Control Systems Technology},
  vol.~24, no.~1, pp.~96--109, 2015.
  
\bibitem{cai2018distributed}
H.~Cai and G.~Hu, ``Distributed robust hierarchical power sharing control of
grid-connected spatially concentrated ac microgrid,'' {\em IEEE Transactions
	on Control Systems Technology}, vol.~27, no.~3, pp.~1012--1022, 2018.


\bibitem{Nasirian2016Droop}
V.~{Nasirian}, Q.~{Shafiee}, J.~M. {Guerrero}, F.~L. {Lewis}, and A.~{Davoudi},
  ``Droop-free distributed control for ac microgrids,'' {\em IEEE Transactions
  on Power Electronics}, vol.~31, no.~2, pp.~1600--1617, 2016.

\bibitem{Matlab_Sims}
``250-kw grid-connected pv array.'' https://www.mathworks.com/help/physmod/sps/examples/250-kw-grid-connected-pv-array.html.




\end{thebibliography}
\end{document}